\def\..{\,\mathpunct{\ldotp\ldotp}} 
\newcommand{\?}{\mskip1.5mu}
\newcommand{\sing}[1]{\left\{\?#1\?\right\}}
\newcommand{\Z}{\mathbf Z}
\newcommand{\lst}[2]{${#1}_0$,~${#1}_1$, $\dots\,$,~${#1}_{#2-1}$}
\newcommand{\xorshift}[1][]{\texttt{xorshift#1}\xspace}
\newcommand{\xoroshiro}[1][]{\texttt{xoroshiro#1}\xspace}
\newcommand{\xoroshirop}[1][]{\texttt{xoroshiro#1+}\xspace}
\newcommand{\xoroshiropp}[1][]{\texttt{xoroshiro#1++}\xspace}
\newcommand{\xoroshiros}[1][]{\texttt{xoroshiro#1*}\xspace}
\newcommand{\xoroshiross}[1][]{\texttt{xoroshiro#1**}\xspace}
\newcommand{\xorshifts}[1][]{\texttt{xorshift#1*}\xspace}
\newcommand{\xorshiftp}[1][]{\texttt{xorshift#1+}\xspace}
\newcommand{\xoshiro}[1][]{\texttt{xoshiro#1}\xspace}
\newcommand{\xoshirop}[1][]{\texttt{xoshiro#1+}\xspace}
\newcommand{\xoshiropp}[1][]{\texttt{xoshiro#1++}\xspace}
\newcommand{\xoshiross}[1][]{\texttt{xoshiro#1**}\xspace}
\newcommand{\mt}[1][]{\texttt{MT19937}\xspace}
\newcommand{\la}{\langle}
\newcommand{\ra}{\rangle}
\newcommand{\Det}{\operatorname{Det}}
\newcommand{\Detr}{\operatorname{\textrm{Det}^{\textrm{r}}}}
\newcommand{\Detc}{\operatorname{\textrm{Det}^{\textrm{c}}}}
\begin{document}

\bibliographystyle{ACM-Reference-Format-Journals}
\acmJournal{TOMS}
\title{Scrambled Linear Pseudorandom Number Generators}
\titlenote{This paper contains version 1.0 of the generators described
therein. This work has been supported by a Google Focused Research Award.}
\author{David Blackman}
\affiliation{%
  \institution{Independent researcher}
  \country{Australia}
}
\author{Sebastiano Vigna}
\email{vigna@acm.org}
\orcid{0000-0002-3257-651X}
\affiliation{%
  \institution{Universit\`a degli Studi di Milano}
  \country{Italy}
}

\begin{abstract}

$\mathbf F_2$-linear pseudorandom number generators are very popular due to their high
speed, to the ease with which generators with a sizable state space
can be created, and to their provable theoretical properties. However,
they suffer from linear artifacts that show as failures in linearity-related
statistical tests such as the binary-rank and the linear-complexity test.
In this paper, we give two new contributions. First, we introduce two new
$\mathbf F_2$-linear transformations that have been handcrafted to have good statistical
properties and at the same time to be programmable very efficiently on superscalar
processors, or even directly in hardware. 
Then, we describe some \emph{scramblers}, that is, nonlinear functions
applied to the state array that reduce or delete the linear artifacts,
and propose combinations of linear transformations and scramblers that give
extremely fast pseudorandom number generators of high quality. A novelty in our
approach is that we use ideas from the theory of filtered linear-feedback
shift registers to prove some properties of our scramblers, rather than
relying purely on heuristics. In the end, we provide simple, extremely
fast generators that use a few hundred bits of memory, have provable
properties, and pass strong statistical tests.
\end{abstract}

\begin{CCSXML}
<ccs2012>
<concept>
<concept_id>10002950.10003648.10003670.10003687</concept_id>
<concept_desc>Mathematics of computing~Random number generation</concept_desc>
<concept_significance>500</concept_significance>
</concept>
</ccs2012>
\end{CCSXML}

\ccsdesc[500]{Mathematics of computing~Random number generation}

\keywords{Pseudorandom number generators}

\maketitle

\section{Introduction}

In the last twenty years, in particular since the introduction of the Mersenne
Twister~\cite{MaNMT}, $\mathbf F_2$-\emph{linear}\footnote{Or, with an equivalent notation,
$\Z/2\Z$-linear generators; since we will not
discuss other types of linear generators, we will omit to specify
the field in the rest of the paper.} pseudorandom number generators have been very popular:
indeed, they are often the stock generator provided by several programming
languages. Linear generators have several advantages: they are fast, it is easy
to create full-period generators with large state spaces, and thanks to their
connection with \emph{linear-feedback shift registers} (LFSRs)~\cite{KleSC} many
of their properties, such as full period, are mathematically provable. Moreover, if
suitably designed, they are rather easy to implement using simple xor and shift
operations.

The linear structure of such generators, however, is detectable by some
statistical tests for randomness: in particular, the binary-rank test~\cite{MarTMSRNS} and the
linear-complexity test~\cite{CarALC,ErdETBK} are failed by all linear
generators.\footnote{In principle: in practice, the specific instance of the
test used must be powerful enough to detect linearity.}
Such tests are implemented, for example, by the testing framework TestU01~\cite{LESTU01}
under the name ``MatrixRank" and ``LinearComp'', respectively. These tests were indeed
devised to ``catch'' linear generators, and they are not considered problematic
by the community working on such generators, as the advantage of being able to
prove precise mathematical properties is perceived as outweighing the failure of
such tests (see~\cite{VigHTLGMT} for a more detailed discussion).

Nonetheless, one might find it desirable to mitigate or eliminate such \emph{linear
artifacts} by \emph{scrambling} a linear generator, that is, applying a
nonlinear function to its state array to produce the actual output.
In this direction, two simple approaches are multiplication by a constant or
adding two components of the state array. However, while empirical tests usually
do not show linear artifacts anymore, the lower 
bits are unchanged or just slightly modified by such operations. Thus, those
bits in isolation (or combined with a sufficiently small number of good bits)
will fail linearity tests.

In this paper, we try to find a middle ground by proposing very fast scrambled
generators with provable properties. By combining in different ways an
underlying linear engine and a scrambler we can provide different tradeoffs
in terms of speed, space usage, and statistical quality.

For example, \xoshiropp[256] is a $64$-bit generator with 256 bits of state that
emits a value in $0.86$\,ns on an Intel\textregistered{} Core\texttrademark{} i7-8700B CPU @$3.20$\,GHz (see
Table~\ref{tab:test64} for details); it passes all statistical tests we are
aware of, and it is $3$-dimensionally equidistributed. Multiple instances can be easily
parallelized using Intel's extended AVX2 instruction set, reducing the time to $0.30$\,ns (for eight instances).
Similarly,
\xoshiross[256] is $4$-dimensionally equidistributed, but it has a lower linear complexity.

However, if the user is interested in the generation of floating-point numbers
only, we provide a \xoshirop[256] generator that generates a value in $0.78$\,ns
(the value must then be converted to float); it is just $3$-dimensionally
equidistributed, and its lowest bits have low linear complexity, but since one needs
just the upper $53$ bits, the resulting floating-point values have no linear bias.
As in the previous case, instances can be parallelized, bringing down the time to $0.22$\,ns.

If space is an issue, a $\xoroshiropp[128]$, $\xoroshiross[128]$, or $\xoroshirop[128]$ generator
provides similar timings and properties in less space.
We also describe higher-dimensional generators, albeit mainly for theoretical
reasons, and $32$-bit generators with similar properties that are useful
for embedded devices and GPUs. Our approach can even provide fast, reasonable
$16$-bit generators.

Finally, we develop some theory related to
our linear engines and scramblers using results from the theory of noncommutative determinants
and from the theory of filtered LFSRs. 

The C code for the generators described in this paper is available from the
authors and it is public domain.\footnote{\url{http://prng.di.unimi.it/}} 
The test code is distributed under the GNU General Public License version 3 or later.

\section{Organization of the paper}

In this paper, we consider words of size $w$, $w$-bit operations, and generators
with $kw$ bits of state, $k\geq 2$. We aim mainly at $64$-bit generators (i.e., $w=64$), but we
also provide $32$-bit combinations.

The paper is organized in such a way to make immediately available code
and basic information for our new generators as quickly as possible: all
theoretical considerations and analyses are postponed to the second part of the
paper, albeit sometimes this approach forces us to point at subsequent material.

Our generators consist of a \emph{linear engine}\footnote{We use consistently
``engine'' throughout the paper instead of ``generator'' when discussing
combinations with scramblers to avoid confusion between the underlying linear generator
and the overall generator, but the two terms are otherwise equivalent.}
and a \emph{scrambler}.
The linear engine is a linear transformation on $\Z/2\Z$, representable by a
matrix, and it is used to advance the internal state of the generator. The
scrambler is an arbitrary function on the internal state which computes the
actual output of the generator. We will usually apply the scrambler to the
current state, to make it easy for the CPU to parallelize internally the
operations of the linear engine and of the scrambler. Such a combination is
quite natural: for example, it was advocated by Marsaglia for \xorshift
generators~\cite{MarXR}, by Panneton and L'Ecuyer in their
survey~\cite{EcPFLRNG}, and it has been used in the design of
XSAdd~\cite{SaMXA} and of the Tiny Mersenne Twister~\cite{SaMTMT}. 
An alternative approach is that of combining an $\mathbf F_2$-linear generator 
with a linear congruential generator with large prime modulus~\cite{LECGCGCDF}.

In Section~\ref{sec:lin} we introduce our linear engines. In
Section~\ref{sec:scr} we describe the scramblers we will be using and their
elementary properties. Finally, in Section~\ref{sec:comb} we describe
generators given by several combinations between scramblers and linear
engines, their speed and their results in statistical tests.
Section~\ref{sec:choice} contains a guide to the choice of an appropriate generator.

In
Section~\ref{sec:poly} and~\ref{sec:ed} we discuss the mathematical properties of
our linear engines:
in particular, we introduce the idea of \emph{word polynomials}, polynomials on
$w\times w$ matrices associated with a linear engine. The word polynomial
makes it easy to compute the \emph{characteristic polynomial}, which is the basic
tool to establish full period. We then provide
\emph{equidistribution} results.

In the last part of the paper, starting with Section~\ref{sec:thscr}, we apply
ideas and techniques from the theory of filtered LFSRs to the problem of
analyzing the behavior of our scramblers.
We provide some exact results and discuss a few heuristics based on extensive
symbolic computation.
Our discussion gives a somewhat more rigorous foundation to the choices made in
Section~\ref{sec:comb}, and opens several interesting problems.

\section{Linear engines}
\label{sec:lin}

In this section we introduce our two linear engines \xoroshiro (xor/rotate/shift/rotate) and 
\xoshiro (xor/shift/rotate).
All modern C/C++ compilers can compile a simulated rotation into a single CPU
instruction, and Java provides intrinsified rotation static methods to the same
purpose. As a result, rotations are no more expensive than a shift, and they
provide better state diffusion, as no bit of the operand is
discarded.\footnote{Note that at least one shift is necessary, as rotations and
xors map the set of words $x$ satisfying $xR^s=x$ for a fixed $s$ into itself,
so there are no full-period linear 	engines using only rotations.}

We denote with $S$ the $w\times w$ matrix on $\Z/2\Z$ that
effects a left shift of one position on a binary row vector (i.e., $S$ is all
zeroes except for ones on the principal subdiagonal) and with $R$ the
$w\times w$ matrix on $\Z/2\Z$ that effects a left rotation of one position (i.e., $R$ is
all zeroes except for ones on the principal subdiagonal and a one in the upper
right corner). We will use $\rho_r(-)$ to denote left rotation by $r$ of a
$w$-bit vector in formulae; in code, we will write
\texttt{rotl(-,r)}.

\subsection{{\fontsize{11pt}{11pt}\selectfont\xoroshiro}}

The \xoroshiro linear transformation updates cyclically two words of a larger state array.
The update rule is designed so that data flows through two computation paths of
length two with a single common dependency halfway, leading to good parallelizability
inside superscalar CPUs.

The base \xoroshiro linear transformation is obtained combining a
rotation, a shift, and again a rotation (hence the name), and it is defined by
the following $2w\times 2w$ matrix:
\[
\mathscr X_{2w}=\left(\begin{matrix}
R^a + S^b + I & R^c\\
S^b + I & R^c\\
\end{matrix}\right).
\]
The general $kw\times kw$ form is given instead by
\begin{equation}
\label{eq:M}
\mathscr X_{kw}=\left(\begin{matrix}
0 & 0 & \cdots & 0 & R^a + S^b + I & R^c\\
I & 0 &  \cdots & 0 & 0 & 0\\
0 & I &  \cdots & 0 & 0 & 0\\
\cdots&\cdots&\cdots&\cdots&\cdots&\cdots\\
0 & 0 &  \cdots & I & 0 & 0\\
0 & 0 &  \cdots & 0 & S^b + I & R^c
\end{matrix}\right)
\end{equation}
Note that the general form applies the basic form to the first and last words
of state, and uses the result to replace the last and next-to-last words. The
remaining words are shifted by one position.

The structure of the transformation may appear repetitive, but it has been
so designed because this implies a very simple and efficient computation path.
Indeed, in Figure~\ref{fig:xoroshiro128} we
show the C code implementing the \xoroshiro transformation for $w=64$ with $128$
bits of state.
 The constants prefixed with
``result'' are outputs computed using different scramblers, which will be
discussed in Section~\ref{sec:scr}.
The general case is better implemented using a form of cyclic update, as shown
in Figure~\ref{fig:xoroshiro1024}.

The reader should note that after the first xor, which represents the only data
dependency between the two words of the state array, the computation of the
two new words can continue in parallel, as depicted graphically in 
Figure~\ref{fig:depxoroshiro128}.

\begin{figure}
\begin{verbatim}
    const uint64_t s0 = s[0];
    uint64_t s1 = s[1];

    const uint64_t result_plus = s0 + s1;
    const uint64_t result_plusplus = rotl(s0 + s1, R) + s0;
    const uint64_t result_star = s0 * S;
    const uint64_t result_starstar = rotl(s0 * S, R) * T;

    s1 ^= s0;
    s[0] = rotl(s0, A) ^ s1 ^ (s1 << B);
    s[1] = rotl(s1, C);
\end{verbatim}
\caption{\label{fig:xoroshiro128}The C code for a
\xoroshirop[128]/\xoroshiropp[128]/\xoroshiros[128]/\xoroshiross[128] generator.
The array \texttt{s} contains two $64$-bit unsigned integers, not all zeros.}
\end{figure}

\begin{figure}
\begin{verbatim}
    const int q = p;
    const uint64_t s0 = s[p = (p + 1) & 15];
    uint64_t s15 = s[q];

    const uint64_t result_plus = s0 + s15;
    const uint64_t result_plusplus = rotl(s0 + s15, R) + s15;
    const uint64_t result_star = s0 * S;
    const uint64_t result_starstar = rotl(s0 * S, R) * T;

    s15 ^= s0;
    s[q] = rotl(s0, A) ^ s15 ^ (s15 << B);
    s[p] = rotl(s15, C);
\end{verbatim}
\caption{\label{fig:xoroshiro1024}The C code for a
\xoroshirop[1024]/\xoroshiropp[1024]/\xoroshiros[1024]/\xoroshiross[1024] generator. The state
array \texttt{s} contains sixteen $64$-bit unsigned integers, not all zeros, 
and the integer variable $p$ holds a number in the interval $[0\..16)$.}
\end{figure}

\begin{figure}
\centering
\includegraphics{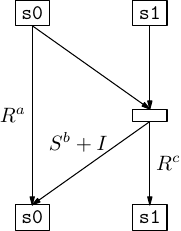}
\caption{\label{fig:depxoroshiro128}The dependency paths of
the \xoroshiro[128] linear engine. Data flows from top to bottom: lines
converging to a box are xor'd together, and labels represent
$w\times w$ linear transformations applied to the data flowing through the
line. Note that the linear transformation $S^b+I$ is a xorshift.}
\end{figure}

\subsection{{\fontsize{11pt}{11pt}\selectfont\xoshiro}}

The \xoshiro linear transformation uses only a shift and a rotation. Since
it updates all of the state at each iteration, it is sensible only for
moderate state sizes. We will discuss the 
$4w\times 4w$ and $8w\times 8w$ transformations
\[
\mathscr S_{4w}=\left(\begin{matrix}
I & I & I & 0\\
I & I & S^a & R^b \\
0 & I & I& 0\\
I & 0 & 0 & R^b\\
\end{matrix}\right)\qquad\qquad
\mathscr S_{8w}=\left(\begin{matrix}
I & I & I & 0 & 0 & 0 & 0 & 0\\
0 & I & 0 & 0 & I & I & S^a & 0\\
0 & I & I & 0 & 0 & 0 & 0 & 0\\
0 & 0 & 0 & I & 0 & 0 & I & R^b\\
0 & 0 & 0 & I & I & 0 & 0 & 0\\
0 & 0 & 0 & 0 & I & I & 0 & 0\\
I & 0 & 0 & 0 & 0 & 0 & I & 0\\
0 & 0 & 0 & 0 & 0 & 0 & I & R^b\\
\end{matrix}\right).
\]

The layout of the matrices above might seem arbitrary, but it is
just derived from the implementation. In Figure~\ref{fig:xoshiro256} and~\ref{fig:xoshiro512}
is it easy to see the algorithmic structure of a
\xoshiro transformation: the second word of the state array is shifted
and stored; then, in order all words of the state array are xor'd with a
different word; finally, the shifted part is xor'd into the next-to-last word of
the state array, and the last word is rotated. The shape of the matrix depends
on the order chosen for the all-words xor sequence.
Figure~\ref{fig:depxoshiro256} shows that also for \xoshiro[256]
dependency paths are very short, and similarly happens for \xoshiro[512].

Note that $\xoshiro$ is not definable for a state of $2w$ bits, and it is too slow for a state of $16w$ bits,
because of the large number of write operations required at each iteration. 
\begin{figure}
\begin{verbatim}
    const uint64_t result_plus = s[0] + s[3];
    const uint64_t result_plusplus = rotl(s[0] + s[3], R) + s[0];
    const uint64_t result_starstar = rotl(s[1] * S, R) * T;

    const uint64_t t = s[1] << A;
    s[2] ^= s[0];
    s[3] ^= s[1];
    s[1] ^= s[2];
    s[0] ^= s[3];
    s[2] ^= t;
    s[3] = rotl(s[3], B);
\end{verbatim}
\caption{\label{fig:xoshiro256}The C code for a \xoshirop[256]/\xoshiropp[256]/\xoshiross[256]
generator. The state array \texttt{s} contains four $64$-bit unsigned integers, not all zeros.}
\end{figure}

\begin{figure}
\begin{verbatim}
    const uint64_t result_plus = s[0] + s[2];
    const uint64_t result_plusplus = rotl(s[0] + s[2], R) + s[2];
    const uint64_t result_starstar = rotl(s[1] * S, R) * T;

    const uint64_t t = s[1] << A;
    s[2] ^= s[0];
    s[5] ^= s[1];
    s[1] ^= s[2];
    s[7] ^= s[3];
    s[3] ^= s[4];
    s[4] ^= s[5];
    s[0] ^= s[6];
    s[6] ^= s[7];
    s[6] ^= t;
    s[7] = rotl(s[7], B);
\end{verbatim}
\caption{\label{fig:xoshiro512}The C code for a \xoshirop[512]/\xoshiropp[512]/\xoshiross[512]
generator. The state array \texttt{s} contains eight $64$bit unsigned
integers, not all zeros.}
\end{figure}

\begin{figure}[t]
\centering
\includegraphics{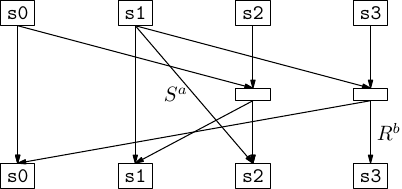}
\caption{\label{fig:depxoshiro256}The data dependencies of
the \xoshiro[256] linear engine.}
\end{figure}

\section{Scramblers}
\label{sec:scr}

Scramblers are nonlinear mappings from the state of the linear engine 
to a $w$-bit value, which will be the output of the generator. The
purpose of a scrambler is to improve the quality of the raw output of the linear
engine: since in general linear transformations have several useful
provable properties, this is a practical approach to obtain a fast, high-quality
generator.

\subsection{Sum}
\label{sec:plus}

The \texttt{+} scrambler simply adds two words of the state array in $\Z/2^w\Z$.
The choice of words is relevant to the quality of the resulting generator, and we performed several
statistical tests to choose the best pair depending on the underlying engine. The
idea appeared in Saito and Matsumoto's XSadd generator~\cite{SaMXA}, and was
subsequently used by the \xorshiftp family~\cite{VigFSMXG}. 

Note that the lowest bit output by the \texttt{+} scrambler is just a xor of bits 
following the same linear recurrence, and thus follows, in turn, the same linear
recurrence.
For this reason, we consider \texttt{+} a \emph{weak scrambler}.
As we consider higher bits, there is still a linear recurrence describing their
behavior, but it becomes quickly of such a high linear complexity to become
undetectable. We will discuss in more detail this issue in
Section~\ref{sec:thscr}. In the sample code, the \verb|result_plus|
output is computed using the \texttt{+} scrambler.

\subsection{Multiplication}
\label{sec:star}

The \texttt{*} scrambler multiplies by a constant a chosen word of the state array, and
since we are updating more than a word at a time, the choice of the word is again
relevant. Its only parameter is the multiplier. The multiplier must be odd, so that the scrambling is a bijection; moreover, if
the second-lowest bit set is in position $b$, the lowest $b$ bits of the output
are unmodified, and the following bit is a xor of bit $0$ and bit $b$, so it
follows the same linear
recurrence as the lower bits, as it happens for the lowest bit of the \texttt{+} scrambler. For this
reason, we consider also \texttt{*} a weak scrambler.

We will use multipliers close to $\varphi 2^w$, where $\varphi$ is the golden
ratio, as $\varphi$ is an excellent multiplier for multiplicative
hashing~\cite{KnuACPI}.
To minimize the number of unmodified bits, however, we will adjust the lower bits
in such a way that bit 1 is set. In the sample code, the \verb|result_star|
output is computed using the \texttt{*} scrambler.

\subsection{Sum, rotation, and again sum}
\label{sec:plusplus}

The \texttt{++} scrambler uses two words of the state array: the two words are
summed in $\Z/2^w\Z$, the sum is rotated to the left by $r$ positions, and
finally we add in $\Z/2^w\Z$ the first word to the rotated sum. Note that 
the choice \emph{and} the order are relevant---the \texttt{++} scrambler on the first and
last word of state is different from the \texttt{++} scrambler on the last and
first word of state. Besides the choice of words, we have to specify the amount
$r$ of left rotation. Since the rotation moves the highest bits obtained after the first
sum to lower bits, it is easy to set up the parameters so that
there are no bits of low linear complexity in the output. For
this reason, we consider \texttt{++} a \emph{strong scrambler}.
In the sample code, the \verb|result_plusplus|
output is computed using the \texttt{++} scrambler.

\subsection{Multiplication, rotation, and again multiplication}
\label{sec:starstar}

The \texttt{**} scrambler is given by a multiply-rotate-multiply sequence applied to a
chosen word of the state array (again, since we are updating more than
a word at a time, the choice of the word is relevant). It thus requires three
parameters: the first multiplier, the amount of left rotation, and the second
multiplier; both multipliers should be odd, so that the scrambling is a bijection. As in the case of the \texttt{++} scrambler, it is easy to
choose $r$ so that there are no bits of low linear
complexity in the output, so \texttt{**} is a
strong scrambler.

We will mostly use multipliers of the form $2^s+1$, which are usually
computed very quickly, and which have the advantage of being
alternatively implementable with a left shift by $s$ and a sum (the compiler should make
the right choice, but one can also benchmark both implementations).
In the sample code, the \verb|result_starstar|
output is computed using the \texttt{**} scrambler.

\section{Combining linear engines and scramblers}
\label{sec:comb}

In this section we discuss several interesting combinations of linear
engines and scramblers, both for the $64$-bit and the $32$-bit case, and report
results of empirical tests.
We remark that all our generators, being based on linear engines, have 
\emph{jump functions} that make it possible to move ahead quickly by any number of
next-state steps. Please refer to~\cite{HMNEJAFRNG,VigFSMXG} for a simple explanation.

Part of our experiments use the BigCrush test suite from the well-known
framework TestU01~\cite{LESTU01}. We follow the protocol
described in~\cite{VigEEMXGS}, which we briefly recall. We sample generators by executing BigCrush
starting from several different seeds, using the same setup of~\cite{VigEEMXGS} (in particular, 
for $64$-bit generators we generate uniform $32$-bit values by returning first
the lower and then the upper $32$ bits of each output). We consider a test failed if its
$p$-value is outside of the interval $[0.001\..0.999]$.
We call \emph{systematic} a failure that happens for all seeds, and report
systematic failures (a more detailed discussion of this choice can be found
in~\cite{VigEEMXGS}).
Note that we run our tests both on a generator and on
the generator obtained by reversing the order of the 64 bits returned.

Moreover, we ran a new test we designed, aimed at detecting Hamming-weight dependencies~\cite{BlVNTHWD},
that is, dependencies in the number of zeros and ones in each output word,
which are typical of linear generators with sparse transition matrices.
We ran the test until we examined a petabyte ($10^{15}$ bytes) of data, or if we obtained a $p$-value
smaller than $10^{-20}$, in which case we reported the amount of data at which we
stop. The
test is failed by several generators for which the Hamming-weight tests 
in TestU01 are unable to find any bias~\cite{BlVNTHWD}, even using several times more data
than for the BigCrush suite.

Not all the generators
we discuss are useful from a practical viewpoint, but discussing several
combinations and their test failures brings to light the limitation of each
component in a clearer way. If the main interest is a practical choice, we
suggest to skip to Section~\ref{sec:choice}.

\subsection{The $64$-bit case}

We consider engines \xoroshiro[128], \xoshiro[256], \xoshiro[512] and
\xoroshiro[1024]; parameters are provided in Table~\ref{tab:par64}. 
However, \xoshiro yields generators that have better behavior
with respect to the tests reported in the first eight lines of
Table~\ref{tab:test64}. All linear engines have obvious linear artifacts, but the \xoroshiro engines require an order of magnitude
less data to fail our Hamming-weight dependency test. Note that this is not only
a matter of size, but also of structure: compare $\xoshiro[512]$ and
$\xoroshiro[1024]$. Analogously, the \texttt{+} scrambler deletes all bias detectable with our test from the
$\xoshiro$ generators, but it just improves the resilience of
$\xoroshirop$ by almost three orders of magnitudes.

We then present data on the $\xoroshiro$ generators combined with the \texttt{*}
scrambler: as the reader can notice, the \texttt{*} scrambler does a much better job at
deleting Hamming-weight dependencies, but a worse job at deleting linear
dependencies, as $\xoroshiros[128]$ still fails MatrixRank when reversed. In
Section~\ref{sec:thscr} we will present some theory explaining in detail why
this happens.
Once we switch to the \texttt{++} and \texttt{**} strong scramblers, we are not able to detect any bias.

The parameters for all scramblers are provided in Table~\ref{tab:scr64}. The
actual state words used by the scramblers are described in the code in
Figure~\ref{fig:xoroshiro128}, \ref{fig:xoroshiro1024}, \ref{fig:xoshiro256} and
\ref{fig:xoshiro512}. Note that the choice of word for the $64$-bit engine
\xoroshiro[128] applies also to the analogous $32$-bit engine \xoroshiro[64], and
that the choice for \xoshiro[256] applies also to \xoshiro[128]. 


Our speed tests have been performed on an Intel\textregistered{} Core\texttrademark{} i7-8700B CPU @$3.20$\,GHz
using \text{gcc} 8.3.0.
We used suitable options to keep the compiler from unrolling loops, or extracting
loop invariants, or vectorizing the computation under the hood.

\subsection{The $32$-bit case}

We consider engines $\xoroshiro[64]$ and $\xoshiro[128]$. Most of the
considerations of the previous section are valid, but in this case for
$\xoroshiro$ we suggest \texttt{*} as a weak scrambler: the \texttt{+} scrambler, albeit
faster, in this case is too weak. As in the previous case, the lowest
bits of the generators using a weak scrambler are linear: however, since the
output is just $32$ bits, BigCrush detects this linearity (see failures in the
reverse test).\footnote{We remark that testing subsets of bits of the
output in the $64$-bit case can lead to analogous results: as long as the subset
contains the lowest bits in the most significant positions, BigCrush will
be able to detect their linearity. This happens, for example, if one
rotates right by one or more positions (or reverses the output) and then tests
just the upper bits, or if one tests the lowest $32$ bits, reversed.
Subsets not containing the lowest bits of the generators will exhibit no systematic
failures of MatrixRank or LinearComp. In principle, the linearity
artifacts of the lowest bits might be detected also simply by modifying the
parameters of the TestU01 tests. We will discuss in
detail the linear complexity of the lowest bits in Section~\ref{sec:thscr}.}

Again, once we switch to the \texttt{**} and \texttt{++} scrambler, we are not able to detect any bias (as for
the \texttt{+} scrambler, we do not suggest to use the \texttt{++} scrambler with
$\xoroshiro[64]$).
The parameters for all scramblers are provided in Table~\ref{tab:scr32}.

\begin{table}
\caption{\label{tab:test64}Results of tests for $64$-bit generators and three additional popular generators.
The columns ``S'' and ``R'' report systematic failures in BigCrush
(MR=MatrixRank, i.e., binary rank; LC=LinearComp, i.e., linear complexity).
The column ``HWD'' reports the number of bytes generating a $p$-value smaller than $10^{-20}$ 
in the test described in~\cite{BlVNTHWD}; no value means that the test was passed after $10^{15}$ bytes.
The time to emit a 64-bit integer and the number of clock
cycles per byte (reported by PAPI~\cite{TJYCPDP}) were computed on an
Intel\textregistered{} Core\texttrademark{} i7-8700B CPU @$3.20$\,GHz.}
\renewcommand{\arraystretch}{1.1}
\begin{tabular}{lrrrrr}
\multirow{2}{*}{Generator} & \multicolumn{3}{c}{Failures}  &
\multirow{2}{*}{ns/64\,b} &
\multirow{2}{*}{cycles/B} \\
\cmidrule(rl){2-4}
& \multicolumn{1}{c}{S} & \multicolumn{1}{c}{R} & \multicolumn{1}{c}{HWD} & \\
\hline
\xoroshiro[128] & MR, LC &MR, LC  &$1\times 10^{10}$&$0.81$&$0.32$\\
\xoshiro[256]  & MR, LC &  MR, LC & $6\times10^{13}$&$0.72$&$0.29$\\
\xoshiro[512] & MR, LC & MR, LC&\multicolumn{1}{c}{---}&$0.83$&$0.39$\\
\xoroshiro[1024]&  MR, LC & MR, LC & $5\times 10^{12}$ &$1.05$&$0.42$\\
\xoroshirop[128] &\multicolumn{1}{c}{---}&\multicolumn{1}{c}{---}&$5\times10^{12}$&$0.72$&$0.29$\\
\xoshirop[256] &\multicolumn{1}{c}{---}&\multicolumn{1}{c}{---}&\multicolumn{1}{c}{---}&$0.78$&$0.31$\\
\xoshirop[512]&\multicolumn{1}{c}{---}&\multicolumn{1}{c}{---}&\multicolumn{1}{c}{---}&$0.88$&$0.35$\\
\xoroshirop[1024] &\multicolumn{1}{c}{---}&\multicolumn{1}{c}{---}&$4 \times10^{13}$&$1.05$&$0.42$\\
\xoroshiros[128] &\multicolumn{1}{c}{---}&MR&\multicolumn{1}{c}{---}&$0.87$&$0.37$\\
\xoroshiros[1024] &\multicolumn{1}{c}{---}&\multicolumn{1}{c}{---}&\multicolumn{1}{c}{---}&$1.11$&$0.44$\\
\xoroshiropp[128]&\multicolumn{1}{c}{---}&\multicolumn{1}{c}{---}&\multicolumn{1}{c}{---}&$0.95$&$0.38$\\
\xoshiropp[256]&\multicolumn{1}{c}{---}&\multicolumn{1}{c}{---}&\multicolumn{1}{c}{---}&$0.86$&$0.34$\\
\xoshiropp[512]&\multicolumn{1}{c}{---}&\multicolumn{1}{c}{---}&\multicolumn{1}{c}{---}&$0.99$&$0.39$\\
\xoroshiropp[1024]&\multicolumn{1}{c}{---}&\multicolumn{1}{c}{---}&\multicolumn{1}{c}{---}&$1.17$&$0.47$\\
\xoroshiross[128]&\multicolumn{1}{c}{---}&\multicolumn{1}{c}{---}&\multicolumn{1}{c}{---}&$0.93$&$0.42$\\
\xoshiross[256]&\multicolumn{1}{c}{---}&\multicolumn{1}{c}{---}&\multicolumn{1}{c}{---}&$0.84$&$0.33$\\
\xoshiross[512]&\multicolumn{1}{c}{---}&\multicolumn{1}{c}{---}&\multicolumn{1}{c}{---}&$0.99$&$0.39$\\
\xoroshiross[1024]&\multicolumn{1}{c}{---}&\multicolumn{1}{c}{---}&\multicolumn{1}{c}{---}&$1.17$&$0.47$\\
SplitMix~\cite{SLFFSPNG} &\multicolumn{1}{c}{---}&\multicolumn{1}{c}{---}&\multicolumn{1}{c}{---}&$1.14$&$0.46$\\
MT19937-64~\cite{MaNMT,NisT64MT} & LC & LC & \multicolumn{1}{c}{---}&$2.19$&$0.94$\\
WELL1024a~\cite{PLMILPGBLRM2}	 & MR, LC & MR , LC & \multicolumn{1}{c}{---}&$8.22$&$3.30$\\
\end{tabular}
\end{table}

\begin{table}\caption{\label{tab:par64}Parameters suggested for the $64$-bit
linear engines used in Table~\ref{tab:test64}. See Section~\ref{sec:poly} for
an explanation of the ``Weight'' column.}
\renewcommand{\arraystretch}{1.1}
\begin{tabular}{lrrrr}
Engine & \multicolumn{1}{c}{\texttt{A}} & \multicolumn{1}{c}{\texttt{B}} &\multicolumn{1}{c}{\texttt{C}} &\multicolumn{1}{c}{Weight}\\
\hline
\xoroshiro[128]& 24 & 16 & 37 & 53\\
\xoroshiropp[128]& 49 & 21 & 28 & 63\\
\xoshiro[256] & 17 & 45 &\multicolumn{1}{c}{---}&115\\
\xoshiro[512] &11 & 21 &\multicolumn{1}{c}{---}& 251\\
\xoroshiro[1024] & 25 & 27 & 36& 439\\
\end{tabular}
\end{table}


\begin{table}\caption{\label{tab:scr64}Parameters suggested for the $64$-bit
scramblers used in Table~\ref{tab:test64}.
}
\renewcommand{\arraystretch}{1.1}
\begin{tabular}{lrrr}
Scrambler & \multicolumn{1}{c}{\texttt{S}} & \multicolumn{1}{c}{\texttt{R}} &\multicolumn{1}{c}{\texttt{T}}\\
\hline
\texttt{*} & \texttt{0x9e3779b97f4a7c13} &\multicolumn{1}{c}{---}&\multicolumn{1}{c}{---}\\
\texttt{**} & 5 & 7 & 9\\
\xoroshiropp[128]& \multicolumn{1}{c}{---} & 17 & \multicolumn{1}{c}{---}\\
\xoshiropp[256]& \multicolumn{1}{c}{---} & 23 & \multicolumn{1}{c}{---}\\
\xoshiropp[512]& \multicolumn{1}{c}{---} & 17 & \multicolumn{1}{c}{---}\\
\xoroshiropp[1024]& \multicolumn{1}{c}{---} & 23 & \multicolumn{1}{c}{---}\\
\end{tabular}
\end{table}

\begin{table}\caption{\label{tab:test32}Results of tests for $32$-bit
generators. The column labels are the same as Table~\ref{tab:test64}.}
\renewcommand{\arraystretch}{1.1}
\begin{tabular}{lrrr}
\multirow{2}{*}{Generator} & \multicolumn{3}{c}{Failures}   \\
\cmidrule(rl){2-4}
& \multicolumn{1}{c}{S} & \multicolumn{1}{c}{R} & \multicolumn{1}{c}{HWD} \\
\hline
\xoroshiro[64] &MR, LC &MR, LC & $5\times 10^8$\\
\xoshiro[128]  &MR, LC &MR, LC & $3.5\times 10^{13}$\\
\xoroshiros[64] & \multicolumn{1}{c}{---} &MR, LC  &\multicolumn{1}{c}{---} \\
\xoshirop[128]  & \multicolumn{1}{c}{---}  &  MR, LC& \multicolumn{1}{c}{---} \\
\xoshiropp[128]  & \multicolumn{1}{c}{---}  &  \multicolumn{1}{c}{---}& \multicolumn{1}{c}{---} \\
\xoroshiross[64]  & \multicolumn{1}{c}{---} & \multicolumn{1}{c}{---} &  \multicolumn{1}{c}{---} \\
\xoshiross[128]  & \multicolumn{1}{c}{---} & \multicolumn{1}{c}{---} & \multicolumn{1}{c}{---} \\
\end{tabular}
\end{table}

\begin{table}\caption{\label{tab:par32}Parameters suggested for the $32$-bit linear
engines used in Table~\ref{tab:test32}.}
\renewcommand{\arraystretch}{1.1}
\begin{tabular}{lrrrr}
Engine & \multicolumn{1}{c}{\texttt{A}} & \multicolumn{1}{c}{\texttt{B}}
&\multicolumn{1}{c}{\texttt{C}}&\multicolumn{1}{c}{Weight}\\
\hline
\xoroshiro[64]& 26 & 9 & 13 & 31\\
\xoshiro[128] & 9 & 11 &\multicolumn{1}{c}{---}&55\\
\end{tabular}
\end{table}

\begin{table}\caption{\label{tab:scr32}Parameters suggested for the $32$-bit
scramblers used in Table~\ref{tab:test32}.}
\renewcommand{\arraystretch}{1.1}
\begin{tabular}{lrrr}
Generator & \multicolumn{1}{c}{\texttt{S}} & \multicolumn{1}{c}{\texttt{R}}
&\multicolumn{1}{c}{\texttt{T}}\\
\hline
\xoroshiros[64] & \texttt{0x9E3779BB} & \multicolumn{1}{c}{---} & \multicolumn{1}{c}{---}\\
\xoroshiross[64]  &  \texttt{0x9E3779BB} & 5 & 5\\
\xoshiropp[128]  & \multicolumn{1}{c}{---} &7 & \multicolumn{1}{c}{---}\\
\xoshiross[128]  &  5 &7 &9\\
\end{tabular}
\end{table}

\subsection{Choosing a generator}
\label{sec:choice}

Our $64$-bit proposals for an all-purpose generator are
$\xoshiropp[256]$ and $\xoshiross[256]$. Both sport excellent speed, a state
space that is large enough for any parallel application,\footnote{With $256$ bits
of state, $2^{64}$ sequences of length $2^{64}$ starting at $2^{64}$ random points in the state
space have an overlap probability of less than $2^{-64}$, which is entirely
negligible~\cite{NauEBP,VigPORSPNG}. One can also use jumping to guarantee the absence of overlap.} and pass all tests we are aware of. 
In theory, $\xoshiropp[256]$ uses simpler operations and can be easily parallelized using Intel's extended AVX2 instruction set;
however, it also accesses two words of state.  Moreover, even if the \texttt{**}
scrambler in $\xoshiross[256]$ is specified using multiplications, it can be
implemented using only a few shifts, xors, and sums. Another difference is that
$\xoshiross[256]$ is $4$-dimensionally equidistributed (see
Section~\ref{sec:ed}), whereas $\xoshiropp[256]$ is just $3$-dimensionally
equidistributed, albeit this difference will not have any effect in practice. On
the other hand, as we will see in Section~\ref{sec:thscr}, the bits of $\xoshiropp[256]$ have
higher linear complexity.

If, however, one has to generate only $64$-bit floating-point numbers (by
extracting the upper $53$ bits), or if the mild linear artifacts in its lowest
bits are not considered problematic, $\xoshirop[256]$ is a faster generator with analogous
statistical properties.\footnote{On our hardware,
generating a floating-point number with $53$ significant bits takes $1.15$\,ns. This datum can be compared,
for example, with the \texttt{dSFMT}~\cite{MuMPSDPFTNUAT}, which 
using extended SSE2 instructions provides a double with $52$ significant bits only
in $0.90$\,ns, but fails linearity tests and our Hamming-weight dependency test~\cite{BlVNTHWD}.}

There are however some cases in which $256$ bits of state are considered too
much, for instance when throwing a very large number of lightweight threads, or
in embedded hardware. In this case, a similar discussion applies to
$\xoroshiropp[128]$, $\xoroshiross[128]$, and $\xoroshirop[128]$, with the
\textit{caveat} that the latter has mild problems with our Hamming-weight
dependency test: however, bias can be detected only after $5$\,TB of data, which
makes it unlikely to affect applications in any way.

Finally, there might be cases that we cannot foresee in which more bits of state
are necessary:
$\xoshiropp[512]$, $\xoshiross[512]$, and $\xoshirop[512]$ should be the first choice, switching to
$\xoroshiropp[1024]$, $\xoroshiross[1024]$, or $\xoroshiros[1024]$ if even more bits are necessary. In
particular, if rotations are available $\xoroshiros[1024]$ is an obvious better
replacement for $\xorshifts[1024]$~\cite{VigEEMXGS}. As previously discussed,
however, it is very difficult to motivate from a theoretical viewpoint a
generator with more than $256$ bits of state.\footnote{We remark that, as
discussed in Section~\ref{sec:poly}, it is possible to create $\xoroshiro$
generators with even more bits of state.}

Turning to $32$-bit generators, $\xoshiropp[128]$, $\xoshiross[128]$, and $\xoshirop[128]$ have a
role corresponding to $\xoshiropp[256]$, $\xoshiross[256]$, and $\xoshirop[256]$ in the $64$-bit
case: $\xoshiropp[128]$ and $\xoshiross[128]$ are our first choice, while $\xoshirop[128]$ is our choice
for $32$-bit floating-point generation.
For $\xoroshiro[64]$ we suggest however a \texttt{*} scrambler, as the \texttt{+} scrambler
turns out to be too weak for this simple engine.

The state of a generator should be in principle seeded with truly random
bits. If only a 64-bit seed is available, we suggest using a
\textsc{SplitMix}~\cite{SLFFSPNG} generator, initialized with the given seed, 
to fill the state array of our generators, as research has shown that
initialization must be performed with a generator radically different in nature from 
the one initialized to avoid correlation on similar seeds~\cite{MWKCDIPNG}.\footnote{It is immediate to define a $32$-bit version of \textsc{SplitMix} to initialize
$32$-bit generators.} Since
\textsc{SplitMix} is an equidistributed generator, the resulting initialized state
will never be the all-zero state. Notice, however, that using a $64$-bit seed
only a minuscule fraction of the possible initial states will be obtainable.
In any case, the seed must be stored for repeatability.

\section{Polynomials and full period}
\label{sec:poly}
One of the fundamental tools in the investigation of linear transformations is
the \emph{characteristic polynomial}. If $M$ is the $n\times n$ matrix representing the
transformation associated with a linear engine the characteristic polynomial is
\[
p(x)=\det(M-xI).
\]
The associated linear engine has \emph{full period} (i.e., maximum-length period $2^n-1$) if and only if $p(x)$
is \emph{primitive} over $\Z/2\Z$~\cite{LiNIFFA}, that is, if $p(x)$
is irreducible and if $x$ has maximum period in the ring of polynomials over
$\Z/2\Z$ modulo $p(x)$. By enumerating all possible parameter choices and checking primitivity 
of the associated polynomials we can discover all full-period
linear engines.

In particular, every bit of a linear engine satisfies a linear recurrence
with characteristic polynomial $p(x)$. Different bits emit
different outputs because they return sequences from different
starting points in the orbit of the recurrence.

The \emph{weight} of $p(x)$ is the number of terms in $p(x)$, that is,
the number of nonzero coefficients. It is considered a good property for a linear
engine of this kind\footnote{Technically, the criterion applies to the linear recurrence represented by the characteristic polynomial.
The behavior of a linear engine, however, depends also on the relationships among all its state bits, so the degree criterion
must always be weighed against other evidence.} that the weight is close to
half the degree, that is, that the polynomial is neither too sparse nor too
dense~\cite{ComHCRBS}.

\subsection{Word polynomials}

A matrix $M$ of size $kw\times kw$ can be viewed as a $k\times k$ matrix on the
ring of $w\times w$ matrices. At that point, some generalization of the
determinant to noncommutative rings can be used to obtain a characteristic
polynomial $p_w(x)$ for $M$ (which will be a polynomial with $w\times w$
matrices as coefficients):
if the determinant of $p_w(x)$ on the base ring (in our case, $\Z/2\Z$) is equal to the characteristic
polynomial of $M$, then $p_w(x)$ is a \emph{word polynomial of size $w$ for
$M$}.

The main purpose of word polynomials is to make easier the computation of the
characteristic polynomial of $M$ (and thus of the determinant of $M$), in particular for large matrices.
Characteristic polynomials can be computed also by applying the Berlekamp--Massey algorithm to a bit of
the the linear engine: in our experience, on large matrices the word-polynomial approach, if
applicable, is faster. The difference in speed is not relevant, however, as the primitivity check
is by far the most expensive step.

In all our examples $w$ is the intended output size of the linear engine, but in
some cases it might be necessary to use a smaller block size, say, $w/2$, to
satisfy commutativity conditions: one might speak, in that case, of the \emph{semi-word
polynomial}.
For blocks of size one, the word polynomial is simply the characteristic
polynomial; the commutation constraints are trivially satisfied.

If all $w\times w$ blocks of $M$ commute as elements of the ring of $w\times w$ matrices, a very well-known result from
Bourbaki~\cite{BouAEM} shows that computing the determinant of $M$ in the
commutative ring $R$ of $w\times w$ matrices generated by the $w\times w$
blocks of $M$ one has
\begin{equation}
\label{eq:det}
\operatorname{det}(\Det(M))=\operatorname{det}(M),
\end{equation}
where ``det'' denotes the determinant in the base ring (in our case, $\Z/2\Z$)
whereas ``Det'' denotes the determinant in $R$. This equivalence provides a very handy
way to compute easily the determinants of large matrices with a block
structure containing several zero blocks and commuting non-zero blocks: one
simply operates on the blocks of the matrix as if they were scalars.

However, if $M$ is the matrix associated with a linear engine, 
$\Det(M)$ can be used also to characterize how the 
current state of the linear engine depends on its previous states. Indeed, since
we are working in a commutative ring the Cayley--Hamilton theorem holds, and
thus $M$ is a root of its characteristic polynomial: if we let
$p_w(x)=\Det(M-xI)$, then $p_w(M)=0$. In more detail, if  
\[p_w(x)=x^k+A_{k-1}x^{k-1}+\dots +A_1x + A_0, \]
where the $A_i$'s are $w\times w$ matrices of the ring $R$ generated by the blocks of $M$, then
\[M^k+M^{k-1}A_{k-1}+\dots +MA_1 + A_0=0.\]
Note that in the formula above we are multiplying $k\times k$
matrices on $R$ by scalar coefficients in $R$ (i.e., as usual a scalar coefficient represents
a diagonal matrix containing the scalar along the diagonal).

Thus, given a sequence of states
$\bm s_0$,~$\bm s_1=\bm s_0  M$,~$\bm s_2=\bm s_0 M^2$, $\dots\,$,~$\bm
 s_k = \bm s_0 M^k$ we have
\begin{equation}
\label{eq:mrmm}
 \bm s_k = \bm
 s_{k-1}A_{k-1}+\dots + \bm s_1A_1 + \bm s_0A_0.
 \end{equation} This recurrence
 makes it possible to compute the next state of a linear engine knowing its previous $k$ states.
 Note that in the equation above in practice we are multiplying each of the $k$ blocks
 of length $w$ of the $\bm s_i$'s by the $A_i$'s.
  
This consideration may seem trivial, as we already know how to compute the next
state given the previous state---a multiplication by $M$ is sufficient---but the
recurrence is true for every $w$-bit block of the state array. Said
otherwise, no matter which word of the state array we choose as output, we can
predict the next output using the equation above knowing the last $k$ outputs (i.e., the previous $k$ states of the chosen word).
Another way of looking at the same statement is that the word polynomial expresses
the linear engine described by $M$ using Niederreiter's \emph{multiple-recursive
matrix method}~\cite{NieMRMMPNG}, much like the characteristic polynomial
expresses a single output bit as a linear recurrence.

Recurrence~(\ref{eq:mrmm}) will work not only in the commutative case, but 
also whenever a sufficiently powerful
extension of the Cayley--Hamilton has been proved for the class of matrices
under examination (e.g., see Theorem 14 of~\cite{CFRAPMM1}, which can
be used to prove~(\ref{eq:mrmm}) for \xorshift linear engines with multiple-word state~\cite{VigEEMXGS}).

\subsection{The noncommutative case}

The observations of the previous section cannot help us in computing the characteristic polynomials
of \xoroshiro or \xoshiro, because their matrices contain non-commuting blocks.
There are two issues in generalizing the arguments we made about the
commutative case: first, we need a notion of noncommutative determinant; second,
we need to know whether~(\ref{eq:det}) generalizes to our case.

Both issues are addressed by recent results by
Sothanaphan~\cite{SotDBMNB}. One starts by defining a (standard) notion of
determinant for noncommutative
rings by fixing the order of the products in Leibniz's formula. In particular,
we denote with $\Detr$ the
\emph{row-determinant} of an $n\times n$ matrix $M$ on a
noncommutative base ring:
\begin{equation}
\label{eq:detr}
\Detr(M)=\sum_{\pi\in S_n}\operatorname{sgn}(\pi)
M_{0,\pi(0)}M_{1,\pi(1)}\cdots M_{n-1,\pi(n-1)}
\end{equation}
Note that the definition is based on Leibniz's formula, but the order
of the products has been fixed. Then, Theorem~1.2 of~\cite{SotDBMNB} shows that
\begin{equation}
\label{eq:detrdet}
\det(\Detr(M)) = \det(M),
\end{equation}
provided that blocks in different columns, but not in the first row,
commute. In other words, one can compute the characteristic polynomial
of $M$ by first computing the ``row'' characteristic polynomial of $M$ by
blocks and then computing the determinant of the resulting matrix. By the
definition we gave, in this case $\Detr(M-xI)$ is a word polynomial for $M$.

The row-determinant is (trivially) antisymmetric with respect to the
permutation of columns.\footnote{In our case, that is, on the base field
$\Z/2\Z$ there is no difference between ``symmetric'' and ``antisymmetric'' as sign
change is the identity. For the sake of generality, however, we will recall the
properties we need in the general case.} Moreover, a basic
property of row-determinants depends on a commutativity
condition on the matrix entries (blocks): if $M$ has \emph{weakly
column-symmetric commutators}, that is, if
\[
M_{ij}M_{kl}-M_{kl}M_{ij} = M_{il}M_{kj}-M_{kj}M_{il} \quad \text{whenever
$i\neq k$ and $j\neq l$,}
\]
then the row-determinant is antisymmetric with respect to the permutation of
rows~\cite{CSSND}.

Dually, we can define the \emph{column-determinant}
\begin{equation}
\label{eq:detc}
\Detc(M)=\sum_{\pi\in S_n}\operatorname{sgn}(\pi)
M_{\pi(0),0}M_{\pi(1),1}\cdots M_{\pi(n-1),n-1}.
\end{equation}
All recalled properties can be easily dualized to the case of the
column-determinant:
in particular,
\begin{equation}
\label{eq:detcdet}
\det(\Detc(M)) = \det(M),
\end{equation}
provided that blocks in different rows, but not in the first column,
commute; if $M$ has \emph{weakly
row-symmetric commutators}, that is, if
\[
M_{ij}M_{kl}-M_{kl}M_{ij} = M_{kj}M_{il}-M_{il}M_{kj} \quad \text{whenever
$i\neq k$ and $j\neq l$,}
\]
then the column-determinant is antisymmetric with respect to the permutation of
columns~\cite{CSSND}.

Finally, if $M$ is \emph{weakly commutative}~\cite{CSSND}, that is, $M_{ij}$ and
$M_{kl}$ commute whenever $i\neq k$ and $j\neq l$ (i.e., noncommuting blocks lie either
on the same row or on the same column) the two determinants are the same, as all
products in~(\ref{eq:detr}) and~(\ref{eq:detc}) can be rearranged arbitrarily.\footnote{We remark that if the only aim is to compute easily
the characteristic polynomial, one can rearrange columns and rows at will until~(\ref{eq:detrdet}) or~(\ref{eq:detcdet}) is true, because these operations
cannot change the value of the determinant on $\Z/2\Z$.}

\subsubsection{\xoroshiro} Our first try is to check the conditions
for~(\ref{eq:detrdet}) on the transition matrix $\mathscr X_{kw}$, but there is
no easy way to modify the $\mathscr X_{kw}$ to satisfy them. However, it is easy
to check that $\mathscr X_{kw}$ has weakly row-symmetric commutators, so we can
move its next-to-last column to the first one, and then the resulting matrix
falls into the conditions for~(\ref{eq:detcdet}). We thus obtain a word
polynomial based on the column-determinant:
\begin{align*}
\Detc&\bigl(\mathscr X_{kw}-xI\bigr) \\
&=\Detc\left(\begin{matrix}
R^a + S^b + I & xI & 0 &  \cdots & 0 &  R^c\\
0 &I & xI &  \cdots & 0 &  0\\
0 &0 & I &  \cdots & 0 &  0\\
\cdots&\cdots&\cdots&\cdots&\cdots&\cdots\\
xI &0 & 0 &  \cdots & I &  0\\
S^b + I & 0 & 0 &  \cdots & 0 & R^c+xI
\end{matrix}\right)\\
&=\bigl(R^a + S^b + I \bigr) \bigl(R^c+xI\bigr) +x \bigl(R^c+xI\bigr)
x^{k-2} +\bigl(S^b + I\bigr)R^c\\
&=x^k I +
x^{k-1}R^c + x\bigl(R^a + S^b + I\bigr) + R^{a+c}.
\end{align*}
Since $\mathscr X_{kw}$ is noncommutative (and it does not satisfy currently known
extensions of the Cayley--Hamilton theorem), it is unlikely that the polynomial above can express
the linear transformation as in~(\ref{eq:mrmm}): and indeed it cannot.  
This lack of commutativity, however, does not
hamper our ability to use the word polynomial to compute the characteristic
polynomial: simply, we cannot obtain directly a recurrence like~(\ref{eq:mrmm}).

Nonetheless, we can check empirically whether \emph{some} of the bits of the output are
predictable using the word polynomial (i.e., whether they satisfy the linear constraints it
expresses). Empirically $4/5$ of the bits of each word of state can be predicted
using the polynomial above when $k=2$ (\xoroshiro[128]), and the ratio becomes
about $1/2$ for $k=16$ (\xoroshiro[1024]).\footnote{The empirical observation
about predicted bits are based on the parameters of Table~\ref{tab:par64}
and~\ref{tab:par32}: different parameters will generate different results.}

\subsubsection{\xoshiro}

In this case, we have to perform an \textit{ad hoc} maneuver to move $\mathscr
S_{4w}$ and $\mathscr S_{8w}$ into a form amenable to the computation of a
word polynomial by row-determinant: we have to exchange the
first two rows. It is very easy to see that this operation cannot modify the
row-determinant because every element of the first row commutes with every
element of the second row: thus, in the products of~(\ref{eq:detr}) the first
two elements can always be swapped.

At that point, by (a quite tedious) Laplace expansion along the first row we get
\[
\Detr \bigl(\mathscr S_{4w}-xI\bigr)
=x^4I+x^3\bigl(R^b+I\bigr)+x^2\bigl(S^a+R^b\bigr)+x\bigl(S^a+I\bigr)\bigl(R^b+I\bigr)+\bigl(S^a+I\bigr)R^b
\]
and
\begin{multline*}
\Detr \bigl(\mathscr S_{8w}-xI\bigr)
\\=x^8I+x^7\bigl(R^b+I\bigr)+x^6\bigl(R^b+I\bigr)+x^5\bigl(S^a+R^b+I\bigr)+x^4\bigl(S^a+I\bigr)\bigl(R^b+I\bigr)\\+x^3\bigl(S^aR^b+R^b+S^a\bigr)
+x^2\bigl(S^a+I\bigr)\bigl(R^b+I\bigr)+x\bigl(S^aR^b+R^b+I\bigr)+R^b.
\end{multline*}
In this case, we have sometimes a behavior similar to the commutative case: for $\mathscr S_{4w}$
(\xoshiro[256]), the second word of state can be predicted
exactly\footnote{Incidentally, if we reverse multiplication order in the coefficients, the first word can be
predicted exactly instead.}; for the other words, about two
thirds of the bits can be predicted.
In the case of $\mathscr S_{8w}$ (\xoshiro[512]), all words except the
last one can be predicted exactly; for the last one, again about two thirds of the bits can be
predicted.

\begin{table}\caption{\label{tab:xorpol}Number of \xoroshiro primitive
polynomials depending on word size and state size.}
\renewcommand{\arraystretch}{1.1}
\begin{tabular}{lc|rrrrrrr}
&& \multicolumn{6}{c}{State size in bits}\\
 & & \multicolumn{1}{c}{$64$} & \multicolumn{1}{c}{$128$}
 &\multicolumn{1}{c}{$256$} & \multicolumn{1}{c}{$512$} &
\multicolumn{1}{c}{$1024$} &  \multicolumn{1}{c}{$2048$} & \multicolumn{1}{c}{$4096$}\\
\hline
&$16$ & $26$ & $21$ & $7$ & $3$ & $1$ & $0$ & $0$\\
$w$&$32$ & $250$ & $149$ & $59$ & $41$ & $16$ & $5$ & $6$\\
&$64$ & $$ & $1000$ & $491$ & $261$ & $129$ & $42$ & $25$\\
\end{tabular}
\end{table}

\begin{table}\caption{\label{tab:xordeg}Maximum weight of a \xoroshiro primitive
polynomial depending on word size and state size.}
\renewcommand{\arraystretch}{1.1}
\begin{tabular}{lc|rrrrrrr}
& &\multicolumn{7}{c}{State size in bits}\\
 & & \multicolumn{1}{c}{$64$} & \multicolumn{1}{c}{$128$} &
 \multicolumn{1}{c}{$256$} & \multicolumn{1}{c}{$512$} & \multicolumn{1}{c}{$1024$} & \multicolumn{1}{c}{$2048$} & \multicolumn{1}{c}{$4096$}\\
\hline
&$16$ & $37$ & $45$ & $73$ & $35$ & $41$ & $$ & $$\\
$w$&$32$ & $39$ & $67$ & $115$ & $201$ & $187$ & $195$ & $143$\\
&$64$ & $$ & $75$ & $139$ & $263$ & $475$ & $651$ & $653$\\
\end{tabular}
\end{table}

\begin{table}\caption{\label{tab:xospol}Number of \xoshiro primitive polynomials
depending on word size and state size.}
\renewcommand{\arraystretch}{1.1}
\begin{tabular}{lc|rrrr}
&& \multicolumn{4}{c}{State size in
bits}\\
 &  &\multicolumn{1}{c}{$64$} & \multicolumn{1}{c}{$128$} &
 \multicolumn{1}{c}{$256$} & \multicolumn{1}{c}{$512$}\\
\hline
&$16$ & $1$ & $0$ & $$ & $$\\
$w$&$32$ & $$ & $1$ & $0$ & $$\\
&$64$ & $$ & $$ & $4$ & $4$\\
\end{tabular}
\end{table}

\begin{table}\caption{\label{tab:xosdeg}Maximum weight of a \xoshiro primitive
polynomial depending on word size and state size.}
\renewcommand{\arraystretch}{1.1}
\begin{tabular}{lc|rrrr}
&& \multicolumn{4}{c}{State size in
bits}\\
 &&  \multicolumn{1}{c}{$64$} & \multicolumn{1}{c}{$128$} &
 \multicolumn{1}{c}{$256$} & \multicolumn{1}{c}{$512$}\\
\hline
&$16$ & $33$ & $$ & $$ & $$\\
$w$&$32$ & $$ & $55$ & $$ & $$\\
&$64$ & $$ & $$ & $131$ & $251$\\
\end{tabular}
\end{table}

\subsection{Full-period linear engines}

Using the word polynomials just described we computed exhaustively all
parameters providing primitive characteristic polynomials, and thus full-period
linear engines, using the Fermat algebra system~\cite{LewF}, stopping the search
at $4096$ bits of state.\footnote{The reason why the number 4096 is
relevant here is that we know the factorization of Fermat's numbers $2^{2^k}+1$
only up to $k=11$. When more Fermat numbers will be factorized, it will be
possible to find linear engines with a larger state
space.} 

Table~\ref{tab:xorpol} and~\ref{tab:xospol}
report the number of primitive polynomials, whereas Table~\ref{tab:xordeg} and~\ref{tab:xosdeg} report the
maximum weight of a primitive polynomial. As one can expect, we find that there are many more
full-period \xoroshiro instances at many
more different state sizes than \xoshiro, due to the additional parameter. We note that
by Proposition~7.1 from~\cite{VigEEMXGS} all full-period linear engines have the property
that each output bit has full period, too.

We did not discuss $16$-bit generators, but there is a \xoshiro and several
\xoroshiro choices available.

\section{Equidistribution}
\label{sec:ed}

\emph{Equidistribution} is a uniformity property of pseudorandom number
generators: a generator with $kw$ bits of state and $w$ output bits is
$d$-dimensionally equidistributed if when we consider the vector of the first
$d$ output values over all possible states of the generator, each vector
appears the same number of times~\cite{LecMECTG}.
In practice, in linear generators over the whole output every $d$-tuple of
consecutive output values must appear $2^{w(k-d)}$ times for $d\leq k$, except for the zero $d$-tuple, which appears
$2^{w(k-d)}-1$ times, as the zero state is not used.
In particular, $1$-dimensionally equidistributed generators with $w$ bits of
state emit each $w$-bit value exactly one time, except for the value zero, which
is never emitted.\footnote{A more refined definition might consider only a subset of bits,
in which case equidistribution in larger dimensions is possible~\cite{LEPFRNG}.}
We will start by discussing the equidistribution of our linear engines (without
scramblers).

Typically, linear generators (e.g., \xorshift) update cyclically a position of
their state array. In this case, the simple fact that the generator has full
period guarantees that the generator is equidistributed in the maximum
dimension, that is, $k$. However, since our linear engines update more than one
position at a time, full period is not sufficient, and different words
of the state may display different equidistribution properties.

Testing for equidistribution in the maximum dimension for a word of the state array is easy using
standard techniques, given the update matrix $\mathscr M$ of the linear
engine:
if we consider the $j$-th word, the square matrix obtained juxtaposing the
$j$-th block columns of $\mathscr M^0=I$,~$\mathscr M^1=\mathscr M$, $\mathscr M^2$,
$\dots\,$,~$\mathscr M^{k-1}$ must be invertible (the inverted matrix returns,
given an output vector of $k$ words, the state that will emit the vector).
It is straightforward to check that every word of $\xoroshiro$ (for every state size) and of \xoshiro[512] is equidistributed in the maximum dimension.
The words of \xoshiro[256] are equidistributed in the maximum dimension 
except for the third word, for which equidistribution depends on the
parameters; we will not use it.

The \texttt{*} and \texttt{**} scramblers cannot alter the equidistribution of the full output of a
linear engine as they just remap sequences bijectively (however, note that if we
start to consider the equidistribution of a subset of output bits 
this is no longer true).
Thus, all our generators using such scramblers are $k$-dimensionally
equidistributed (i.e., in the maximum dimension).

We are left with proving equidistribution results for our generators based on the
\texttt{+} scrambler and on the \texttt{++} scrambler. For $d$-dimensionally
equidistributed linear engines that update cyclically a single position, adding
two consecutive outputs can be easily proven to provide a $(d-1)$-dimensionally
equidistributed generator. However, as we already noticed our linear engines
update more than one position at a time: we thus proceed to develop a general technique,
which can be seen as an extension of the standard technique to prove equidistribution of a purely
linear generator, and will be used in the following sections.

Note that since we have to mix operations from two algebraic structures, throughout this section the
symbols $+$ and $-$ will denote operations in $\Z/2^w\Z$, whereas $\oplus$ will
denote sum in $(\Z/2\Z)^w$.

\subsection{A general technique for equidistribution of \texttt{+}/\texttt{++}-scrambled generators}

For a linear engine with $k$ words of state, we consider a vector of variables
$\bm x = \la x_0,x_1,\dots,x_{k-1}\ra$ representing the state of the engine.
Then, for each $0\leq i <d$, where $d$ is the target equidistribution, we add
variables $t_i, u_i$ and equations $t_i=\bigl(\bm x \mathscr M^i\bigr)_p$,
$u_i=\bigl(\bm x \mathscr M^i\bigl)_q$, where $p$ and $q$ are the two state
words to be used by the \texttt{+} or \texttt{++} scrambler.

Given a target vector of output values $\la v_0, v_1, \dots, v_{d-1}\ra$, we
would like to show that there are $2^{w(k-d)}$ possible values of $\bm x$ that
will give the target vector as output.
This condition can be expressed by equations on the $t_i$'s and the $u_i$'s
involving the arithmetic of $\Z/2^w\Z$. In the case of the \texttt{+} scrambler,
we have equations $v_i=t_i+u_i$; in particular, $p$ and $q$ can be exchanged
without affecting the equations.
In the case of the \texttt{++} scrambler, instead, if $t_i$ denotes the first
word of state used by the scrambler (see Section~\ref{sec:plusplus}) we have
$v_i = (u_i+t_i)R^r+ t_i$, but there is no way to derive $t_i$ from $u_i$; the
dual statement is true if $t_i$ denotes the second word of state used by the
scrambler.

Now, to avoid mixing operations in $\Z/2^w\Z$ and $(\Z/2\Z)^w$ we will first
solve using standard linear algebra the $x_i$'s in terms of the $t_i$'s and
the $u_i$'s. At that point, we will be handling a new set of constraints in
$(\Z/2\Z)^w$ containing only $t_i$'s and $u_i$'s:
using a limited amount of \textit{ad hoc} reasoning, we will have to show how by
choosing $k-d$ parameters freely we can satisfy at the same time both the new
set of constraints and the equations on $\Z/2^w\Z$ induced on the $t_i$'s and
the $u_i$'s by the choice of a scrambler. If we will be able to do so, we will
have parameterized all occurrences of $\la v_0, v_1, \dots, v_{d-1}\ra$ in the
output using $k-d$ parameters, so such occurrences must be at most $2^{w(k-d)}$.
But since there are $2^{wk}$ $d$-dimensional output vectors (including the
all-zero output associated with the all-zero state), by pigeonholing the occurrences must be
exactly $2^{w(k-d)}$.

\subsection{\xoroshiro}

\begin{proposition}
\label{prop:xoroshirop}
A \xoroshirop generator with $w$
bits of output and $kw$ bits of state applying the \texttt{+} scrambler (see Section~\ref{sec:plus}) to the first and last word of
state is $(k-1)$-dimensionally equidistributed.
\end{proposition}
\begin{proof}
For the case $k=2$ the full period of the underlying \xoroshiro generator
proves the statement. If $k>2$, denoting with the $t_i$'s the first word and
with the $u_i$'s the last word our technique applied to $\mathscr X_{kw}$ provides
equations
\begin{align*}
\label{eq:xoroshirop}
t_i &= x_i &  0&\leq i \leq k - 2\\
u_0 &= x_{k-1}\\
u_{i+1} &= (t_i\oplus u_i)R^c& 0 &\leq i \leq k - 3
\end{align*}
Thus, the only constraint on the $t_i$'s and $u_i$'s is the last equation.
It is immediate that once we assign a value to $u_0$
we can derive a value for $t_0 = v_0-u_0$, then a value for $u_1$ and so on.
\end{proof}

Note that the claim of Proposition~\ref{prop:xoroshirop} cannot be extended to $k$-dimensional
equidistribution.
Consider the full-period $5$-bit generator with $10$ bits of state and
parameters $a=1$, $b=3$, and $c=1$. As a \xoroshiro generator it is
$2$-dimensionally equidistributed, but it is easy to verify that the sequence of
outputs of the associated \xoroshirop generator is not $2$-dimensionally
equidistributed (it is, of course, $1$-dimensionally equidistributed by
Proposition~\ref{prop:xoroshirop}).

The proof of Proposition~\ref{prop:xoroshirop} can be easily extended to the
case of a \xoroshiropp generator. 
\begin{proposition}
\label{prop:xoroshiropp}
A \xoroshiropp generator with $w$
bits of output and $kw$ bits of state applying the \texttt{++} scrambler (see Section~\ref{sec:plusplus})
to the last and first word of
state is $(k-1)$-dimensionally equidistributed.
If $k=2$, also  applying the \texttt{++} scrambler the first and last word of state yields a
$1$-dimensionally equidistributed generator.
\end{proposition}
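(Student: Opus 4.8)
The plan is to mirror the proof of Proposition~\ref{prop:xoroshirop} as closely as possible, changing only the step that involves the scrambler. Since the linear engine is still $\mathscr X_{kw}$, tracking the first word of state with the $t_i$'s and the last word with the $u_i$'s, the general technique of Section~\ref{sec:ed} produces exactly the same engine relations
\begin{align*}
t_i &= x_i & 0 &\leq i \leq k-2, \\
u_0 &= x_{k-1}, \\
u_{i+1} &= (t_i + u_i)R^c & 0 &\leq i \leq k-3,
\end{align*}
where $+$ denotes sum in $(\Z/2\Z)^w$. These relations depend only on the engine and on the choice of the first and last words, so they carry over verbatim; the only thing that changes is the equation tying $t_i$ and $u_i$ to the prescribed output $v_i$.

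Because the \texttt{++} scrambler is now applied to the last and first word, in that order, its first operand is $u_i$ and its second operand is $t_i$. First I would invoke the dual of the inversion formula recalled in Section~\ref{sec:ed}: since the first operand $u_i$ is the quantity we shall already know, the scrambler is invertible for the second operand, yielding $t_i = (v_i - u_i)R^{-r} - u_i$, with all subtractions in $\Z/2^w\Z$. The recursion then runs just as in Proposition~\ref{prop:xoroshirop}, but with the output step recovering $t_i$ from $u_i$ rather than the reverse: assign $u_0$ freely, derive $t_0 = (v_0 - u_0)R^{-r} - u_0$, obtain $u_1 = (t_0 + u_0)R^c$ from the engine relation, derive $t_1$ from $u_1$, and so on up to $t_{k-2}$. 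Every word of $\la x_0,\dots,x_{k-1}\ra = \la t_0,\dots,t_{k-2},u_0\ra$ is then a function of the single free word $u_0$, so each target $\la v_0,\dots,v_{k-2}\ra$ is emitted by exactly $2^w = 2^{w(k-(k-1))}$ states, which is $(k-1)$-dimensional equidistribution (the case $k=2$ included, the recurrence then being vacuous).

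I expect the one genuinely delicate point to be lining up the unique invertible direction of the \texttt{++} scrambler with the direction in which the recursion must advance. The scrambler lets us reconstruct only its second operand from its first, so for $k>2$ the free word $u_0$ is forced to play the role of first operand at every step, and this is precisely what dictates the ``last and first'' ordering in the statement; the ``first and last'' ordering would instead demand recovering a first operand from a second, which is exactly the direction that is not invertible.

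For the final sentence, the saving grace when $k=2$ is that there is a single output constraint and no engine recurrence to propagate, so the solvable direction of the scrambler need not agree with any recursion. With the scrambler applied to the first and last word, the first operand is $x_0$ and the second is $x_1$; inverting for the second operand gives, for each choice of $x_0$, the unique solution $x_1 = (v_0 - x_0)R^{-r} - x_0$, with subtractions in $\Z/2^w\Z$. Hence every target is hit by exactly $2^w$ states, and the generator is $1$-dimensionally equidistributed.
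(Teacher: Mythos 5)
Your proposal is correct and follows essentially the same route as the paper: it reuses the engine relations from Proposition~\ref{prop:xoroshirop}, replaces the output equation by $t_i = (v_i - u_i)R^{-r} - u_i$ (the invertible direction of the \texttt{++} scrambler when applied to the last and first words), and runs the same recursion from a free choice of $u_0$; the $k=2$ case is likewise handled by observing that there is no engine constraint, so either operand order works. Your added remark about why the operand order is forced for $k>2$ is a correct elaboration of what the paper leaves implicit.
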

\begin{proof}
We use the same notation as in Proposition~\ref{prop:xoroshirop}.
For the case $k=2$ the full period of the underlying \xoroshiro generator
proves the statement, as there is no constraint, so the only equation between
$t_0$ and $u_0$ is either 
$u_0 = (v_0 - t_0)R^{-r}- t_0$, if we are scrambling the first and the last
words of state, or $t_0 = (v_0 - u_0)R^{-r}- u_0$, if we are scrambling
the last and the first one.

Otherwise, we proceed as in the proof of
Proposition~\ref{prop:xoroshirop}. Since we are scrambling the last and first
word, we have $t_i = (v_i - u_i)R^{-r}- u_i$, and the proof can be completed in
the same way.
\end{proof}

The counterexample we just used for \xoroshirop can be used in the \xoroshiropp
case, too, to show that the claim of Proposition~\ref{prop:xoroshiropp} 
cannot be extended to $k$-dimensional
equidistribution. Moreover, the full-period \xoroshiropp $4$-bit generator with
$16$ bits of state and parameters $a=3$, $b=1$ and $c=2$ is not even $3$-dimensionally
equidistributed if we scramble the first and last word (instead of the last and
the first), showing that the stronger statement for $k=2$ does not extend to
larger values of $k$.

\subsection{\xoshiro}

\begin{proposition}
\label{prop:xoshiro256p}
A \xoshirop generator with $w$ bits of output and $4w$ bits of state 
applying the \texttt{+} scrambler (see Section~\ref{sec:plus}) to the first and last word of state is $3$-dimensionally equidistributed.
\end{proposition}
\begin{proof}
In this case, denoting with the $t_i$'s the first word and
with the $u_i$'s the last word, our technique applied to $\mathscr S_{4w}$
provides the constraints
\begin{align*}
t_0 &= t_2 \oplus  u_2 R^{-b} \oplus  t_1 R^{-b}\\
t_1 &= t_2 \oplus  u_2 R^{-b}
\end{align*}
But if we choose $t_2$ arbitrarily, we can immediately compute
$u_2=v_2-t_2$ and then $t_1$ and $t_0$.
\end{proof}

Once again, the claim of Proposition~\ref{prop:xoshiro256p} cannot be extended to $4$-dimensional
equidistribution. The only possible $2$-bit \xoshiro generator with $8$ bits of
state has full period but it is easy to verify that the associated \xoshirop
generator is not $4$-dimensionally equidistributed.

Now, we prove an analogous equidistribution result for \xoshirop[512].
\begin{proposition}
\label{prop:xoshiro512p}
A \xoshirop generator with $w$ bits of output and $8w$ bits of state and
applying the \texttt{++} scrambler (see Section~\ref{sec:plusplus}) the first and third word of state is $7$-dimensionally
equidistributed.
\end{proposition}
\begin{proof}
Denoting with the $t_i$'s the first word and
with the $u_i$'s the third word and 
applying again our general technique, we obtain the constraints
\[
                     t_i = u_i \oplus  u_{i+1} \qquad 0\leq i \leq 5.\\
\]
Choosing a value for $u_0$ (and thus $t_0=v_0-u_0$) gives by the first equation
$u_1=t_0\oplus u_0$ and thus $t_1=v_1 - u_1$, by the second equation
$u_2=t_1\oplus u_1$ and $t_2=v_2 - u_2$, and so on.
\end{proof}

Note that the claim of Proposition~\ref{prop:xoshiro512p} cannot be extended to $8$-dimensional
equidistribution: the \xoshirop generator associated with the only full period
$5$-bit \xoshiro generator with $40$ bits of state ($a=2$, $b=3$) is not
$8$-dimensionally equidistributed.

\begin{proposition}
\label{prop:xoshiro256pp}
A \xoshiropp generator with
$w$ bits of output and $4w$ bits of state and scrambling the first and last
words of state is $3$-dimensionally equidistributed.
\end{proposition}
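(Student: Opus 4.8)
The plan is to reuse the template of Proposition~\ref{prop:xoshiro256p} essentially unchanged, exploiting that the linear-algebra half of our general technique does not see the scrambler at all. I would again denote with the $t_i$'s the first word of state and with the $u_i$'s the last word (the \texttt{++} scrambler of Figure~\ref{fig:xoshiro256} uses $s[0]$ and $s[3]$), and observe that the step in which the $x_i$'s are solved in terms of the $t_i$'s and $u_i$'s over $(\Z/2\Z)^w$ depends only on $\mathscr S_{4w}$ and on the chosen pair of positions, not on how $v_i$ relates to $t_i$ and $u_i$. Consequently the two constraints produced by the technique are exactly the ones already derived there,
\begin{align*}
t_0 &= t_2 + u_2 R^{-b} + t_1 R^{-b}\\
t_1 &= t_2 + u_2 R^{-b},
\end{align*}
and the crucial structural feature is that $u_0$ and $u_1$ do not appear in them.

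The only new ingredient is the scrambler equation. Since we scramble the first and last word, with the first word playing the role of $t_i$, the relation recorded in the preamble to this section reads $u_i=(v_i-t_i)R^{-r}-t_i$, which determines $u_i$ from $t_i$ once the target $v_i$ is fixed. I would then run the same elimination as in Proposition~\ref{prop:xoshiro256p}: pick $t_2$ arbitrarily; the scrambler relation at $i=2$ gives $u_2$; the second constraint gives $t_1$ and the first gives $t_0$; finally the scrambler relations at $i=0,1$ yield $u_0$ and $u_1$. Because these last two variables are absent from the linear-engine constraints, they impose no consistency condition, so each choice of $t_2\in(\Z/2\Z)^w$ extends to exactly one admissible tuple, hence (through the solved $x_i$'s) to exactly one state. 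This produces $2^w=2^{w(k-d)}$ states for every target triple $\la v_0,v_1,v_2\ra$, with the standard one-off correction at the zero triple coming from the excluded zero state.

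I do not expect a genuine obstacle. The one thing that might look worrying---that forcing $u_0$ and $u_1$ through the \texttt{++} relation could clash with the engine---simply cannot happen, precisely because those variables never enter the two constraints; this is the same phenomenon that made the \texttt{+} case go through. The only fact that deserves explicit mention is that the map sending a state to $\la t_0,u_0,t_1,u_1,t_2,u_2\ra$ has rank $4w$, so that an admissible tuple corresponds to a \emph{unique} state and the count is exactly $2^w$; but this is the very solvability already used (and checked) for the first and last words of \xoshiro[256] in Proposition~\ref{prop:xoshiro256p}, so it can be imported without repetition.
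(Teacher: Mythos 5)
Your proposal is correct and follows essentially the same route as the paper: the paper's own proof likewise observes that the constraints are identical to those of Proposition~\ref{prop:xoshiro256p} and that the scrambler relation $u_i=(v_i-t_i)R^{-r}-t_i$ lets one recover the $u_i$'s from the $t_i$'s, so the free choice of $t_2$ yields exactly $2^w$ states per target triple. Your added remarks (that $u_0,u_1$ are absent from the engine constraints and that the tuple-to-state map is invertible) are just explicit statements of what the paper leaves implicit.
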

\begin{proof}
The proof uses the same notation of
Proposition~\ref{prop:xoshiro256p}, and proceeds in the same way: the equations
we obtain are the same, and due to the choice of scrambler we have $u_i = (v_i - t_i)R^{-r}- t_i$, so can
obtain the $u_i$'s from the $t_i$'s.
%
\end{proof}

\begin{proposition}
\label{prop:xoshiro512pp}
A \xoshiropp generator with
$w$ bits of output and $8w$ bits of state scrambling the third and first words
of state is $7$-dimensionally equidistributed.
\end{proposition}
\begin{proof}
The proof uses the same notation of
Proposition~\ref{prop:xoshiro512p}, and proceeds in the same way: the equations
we obtain are the same, and due to the choice of scrambler we have $t_i = (v_i - u_i)R^{-r}- u_i$, so we can obtain the $t_i$'s from the $u_i$'s.
\end{proof}

The counterexamples for Proposition~\ref{prop:xoshiro256p} and~\ref{prop:xoshiro512p} used to prove that their
claims cannot be extended to higher-dimensional equidistribution
work also for Proposition~\ref{prop:xoshiro256pp} and~\ref{prop:xoshiro512pp}, respectively.

\section{Escaping zeroland}

We show in Figure~\ref{fig:ez} the speed at which the generators hitherto
examined ``escape from zeroland''~\cite{PLMILPGBLRM2}: 
linear engines need some time to get from an initial state with a small number
of bit set to one to a state in which the ones are approximately
half (famously, the Mersenne Twister requires millions of
iterations), and while scrambling reduces this phenomenon, it is nonetheless
detectable.
The figure shows a measure of escape time given by the ratio of 
ones in a window of 4 consecutive 64-bit values sliding over the first 1000
generated values, averaged over all possible seeds with exactly one bit set (see~\cite{PLMILPGBLRM2} for a detailed
description).

\begin{figure}
\centering
\includegraphics[scale=.8]{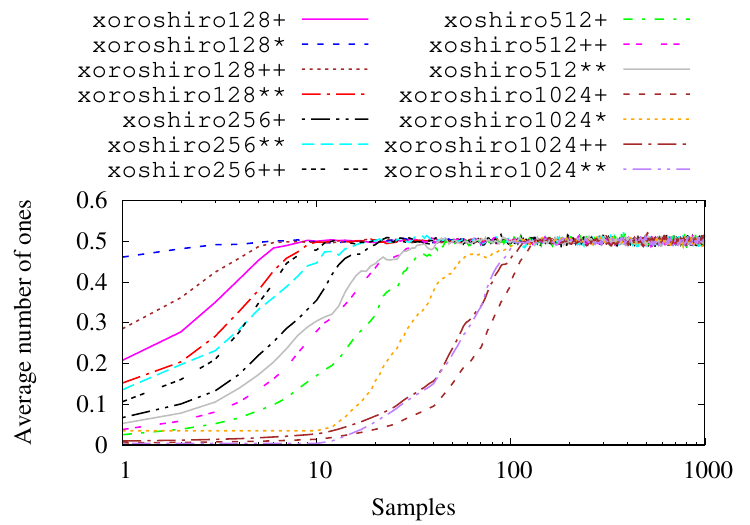}
\caption{\label{fig:ez}Convergence to ``half of the bits are ones in average'' plot.}
\end{figure}

\section{A theoretical analysis of scramblers}
\label{sec:thscr}

We conclude the paper by discussing our scramblers from a theoretical point of
view. We cast our discussion in the same theoretical framework as that of
\emph{filtered linear-feedback shift registers}. A filtered LFSR
is given by an underlying LFSR and by a Boolean function that
is applied to the state of the register. The final output is the output of the Boolean function.
If the LFSR updates one bit at a time, we can see the Boolean function as sliding on the
sequence of bits generated by the LFSR, emitting a scrambled output. The purpose
of filtering a LFSR is that of making it more difficult to guess its next bit: the analogy
with linear engines and scramblers is evident, as every scrambler can be seen as a set of $w$
boolean functions applied to the state of the linear engine.
There are however a few differences:
\begin{itemize}
  \item we use only primitive polynomials;
  \item  we use several Boolean functions, and we are concerned
with the behavior of their combined outputs;
  \item  we do not apply a Boolean function to a sliding
window of the same LFSR: rather, we have $kw$ copies of the same LFSR whose
state is different, and we apply our set of Boolean functions to their single-bit output concatenated; 
 \item we are not free to design our favorite Boolean functions: we are restricted to the ones
  computable with few arithmetic and logical operations;
\item  we are not concerned
with predictability in the cryptographic sense, but just in the elimination of linear artifacts,
that is, failures in tests for binary rank, linear complexity, and Hamming-weight
dependencies.
\end{itemize}

We will see that many basic techniques coming from the cryptographic analysis
of filtered LFSRs can be put to good use in our case.
We will bring along a very simple example:
a full-period \xorshift linear engine with $w=3$ and $6$ bits of state~\cite{VigEEMXGS}. Its parameters are
$a=1$ (left shift), $b=2$ (right shift), $c=1$ (right shift), and its characteristic polynomial is $p(x)=x^6 + x^5 + x^3 + x^2
+ 1$.

\subsection{Representation by generating functions}

We know that all bits of a linear engine satisfy linear recurrences with the same
characteristic polynomial, but we can be more precise: we can fix a nonzero
initial state and compute for each bit the \emph{generating function} associated
with the bit (see~\cite{KleSC} for a detailed algorithm).
Such functions have at the denominator the \emph{reciprocal polynomial} $x^n
p(1/x^n)$ ($n$ here is the degree of $p$), whereas the numerator (a polynomial
of degree less than $n$) represents the initial state.  In our example,
representing the first word using $x_0$ (lowest bit), $x_1$, $x_2$, the second word using $y_0$, $y_1$ and $y_2$, and using
as initial state all bits set to zero except for $x_0$, we have
\begin{align*}
F_{x_0}(z) &= \frac{z^5 + z^2 + z + 1}{z^6 + z^4 + z^3 + z + 1} & F_{y_0}(z) &= \frac{z^5 + z^4 + z^3 + z^2 + z}{z^6 + z^4 + z^3 + z + 1}\\  
F_{x_1}(z) &= \frac{z^5 + z^2}{z^6 + z^4 + z^3 + z + 1} &  F_{y_1}(z) &= \frac{z^4 + z}{z^6 + z^4 + z^3 + z + 1}\\
F_{x_2}(z) &= \frac{z^5 + z^4}{z^6 + z^4 + z^3 + z + 1} &  F_{y_2}(z) &= \frac{z^4 + z^3}{z^6 + z^4 + z^3 + z + 1}\\
\end{align*}
If we write the formal series associated with each function, the $i$-th coefficient will give exactly the $i$-th output
of the corresponding bit of the linear engine.

The interest in the representation by generating function lies in the fact that now we can perform
some operations on the bits. For example, to study the lower bits of a generator
using the $+$ scrambler to our linear engine, we add two bits, and we can easily compute the associated function, as adding coefficients
is the same as adding functions:
\[
F_{x_0+y_0}(z) = F_{x_0}(z)+F_{y_0}(z) \\= \frac{z^5 + z^2 + z + 1}{z^6 + z^4 + z^3 + z + 1}+
\frac{z^5 + z^4 + z^3 + z^2 + z}{z^6 + z^4 + z^3 + z + 1} = \frac{z^4 + z^3 + 1}{z^6 + z^4 + z^3 + z + 1}.
\]
However, we are now stuck, because 
addition over $\Z/2^w\Z$ needs more than just xors. With 
\lst xw and \lst yw representing the
bits, from least significant to most significant, of two $w$-bit words, 
to represent their arithmetic sum over $\Z/2^w\Z$ we can define the result bits
$s_i$ and the carry bits $c_i$ using the recurrence
\begin{align}
\label{eq:recsum}
s_i &= x_i + y_i + c_{i-1}\\
c_i &= (x_i+ y_i)c_{i-1} + x_iy_i 
\end{align}
where $c_{-1} = 0$. This recurrence is fundamental because carries are the only source of nonlinearity
in our scramblers (even multiplication by a constant can be turned into a series of shifts and sums).
It is clear that to continue to the higher bits we need to be able
to multiply two sequences, but multiplying generating functions, unfortunately, corresponds to a convolution of coefficients.

\subsection{Representation in the splitting field}

We now start to use the fact that the characteristic polynomial of our linear engine is primitive.
Let $\mathbf E$ be the \emph{splitting field} of a primitive polynomial $p(-)$ of degree
$n$ over $\Z/2\Z$~\cite{LiNIFFA}. In particular, $\mathbf E$ can be
represented as $(\Z/2\Z)[\alpha]/p(\alpha)$, that is, by polynomials in $\alpha$ computed modulo $p(\alpha)$, and in
that case by primitivity the zeroes of $p(-)$ are exactly the powers
\[
\alpha, \alpha^2, \alpha^4, \alpha^8, \ldots, \alpha^{2^{n-1}}, 
\]
that is, the powers having exponents in the \emph{cyclotomic coset}
$C=\bigl\{\,1,2,4,8,\ldots, 2^{n-1}\,\bigr\}$. Note that
$\alpha^{2^n}=\alpha$ in $\mathbf E$.
Every rational function $f(z)$ representing the output of a bit of the linear engine can then be
expressed as a sum of partial fractions
\begin{equation}
\label{eq:parfrac}
f(z) = \sum_{c\in C}\frac{\beta_c}{1-z\alpha^c},
\end{equation}
where $\beta_i\in \mathbf E$, $\beta_c\neq 0$. As a consequence~\cite{KleSC},
the $j$-th bit
$b_j$ of the sequence associated with $f(z)$ has an explicit description:
\begin{equation}
\label{eq:repr}
b_j =\sum _{c\in C}\beta_c\bigl(\alpha^c\bigr)^j.
\end{equation}
This property makes it possible to compute the sum 
of two sequences and the (output-by-output) product of two sequences. We just
need to compute the sum or the product of the representation~(\ref{eq:repr}).
The sum of two sequences is just a term-by-term sum, whereas
in the case of a product we obtain a convolution.
In both cases, we might experience \emph{cancellation}---some of the $\beta$'s
might become zero. But, whichever operation we apply, we will obtain in the end
for a suitable set $S\subseteq[2^n]$ a representation of the form
\begin{equation}
\label{eq:repr2}
\sum _{c \in S}\beta_c\alpha^c.
\end{equation}
with $\beta_c\neq 0$. The cardinality of $S$ is now exactly the degree of the
polynomial at the denominator of the rational function
\[
g(z) = \sum_{c\in S}\frac{\beta_c}{1-z\alpha^c}
\]
associated with the new sequence, that is, its \emph{linear
complexity}~\cite{KleSC}.
In our example, the coefficients of the representation~(\ref{eq:repr}) of $x_0$ are
\begin{align*}
\beta_1&= \alpha^4 + \alpha^3&\beta_8&=\alpha^4 + \alpha^3 + \alpha^2 + \alpha\\
 \beta_2&=\alpha^5 + \alpha^3 + \alpha^2 + \alpha&\beta_{16}&= \alpha^5 + \alpha^4 + \alpha^3 + \alpha\\
 \beta_4&=\alpha^5 + \alpha^4 + \alpha^2 + 1&\beta_{32}&= \alpha^5 + \alpha^2 + a\\
\end{align*}
and similar descriptions are available for the other bits, so
we are finally in the position of computing exactly the values of the 
recurrence~(\ref{eq:recsum}): we simply have to use the representation
in the splitting field to obtain a representation of $s_i$, and then revert to
functional form using~(\ref{eq:parfrac}).

\begin{figure}
\tiny
\begin{align*}
F_{s_0}(z) &=  \frac{z^4 + z^3 + 1}{z^6 + z^4 + z^3 + z + 1}\\
F_{s_1}(z) &= \frac{z^{13} + z^9 + z^8 + z}{z^{15} + z^{14} + z^{11} + z^7 + z^4 + z^3 + 1}\\
F_{s_2}(z) &= \frac{z^{37} + z^{35} + z^{32} + z^{30} + z^{27} + z^{25} + z^{24} + z^{20} + z^{19} + z^{17} +
z^{14} + z^{11} + z^8 + z^2}{z^{41} + z^{39} + z^{34} + z^{32} + z^{30} + z^{28} + z^{27}
+ z^{26} + z^{24} + z^{23} + z^{17} + z^{16} + z^{15} + z^{14} + z^{13} + z^{11} + z^9 +
z^8 + z^7 + z^6 + z^5 + z^3 + 1}
\end{align*}
\caption{\label{fig:sumgf}The generating functions of the three bits of the \xorshiftp generator.}
\end{figure}

The result is shown in Figure~\ref{fig:sumgf}: as it is easy to see, we can
still express the bits of \xorshiftp as LFSRs, but their linear complexity
rises quickly (remember that every generator with $n$ bits of state is a linear generator of
degree $2^n$ with characteristic polynomial $x^{2^n}+1$, so ``linear'' should
always mean ``linear of low degree'').

Note that the generating function is irrelevant for our purposes: the only relevant fact is
that the representation in the splitting field of the first bit has $6$ coefficients, that
of the second bit $15$ and that of the third bit $41$, because, as we have
already observed, these numbers are equal to the linear complexity of
the bits of our \xorshiftp generator.
%
Unfortunately, this approach can be applied only to state arrays of less than a
dozen bits: as the linear complexity increases due to the influence of carries,
the number of terms in the representation~(\ref{eq:repr2}) grows quickly, up to
being unmanageable. Thus, this approach is limited to the analysis of small
examples or the construction of counterexamples.

\subsection{Representing scramblers by polynomials}
\label{sec:raprscr}

A less exact but more practical approach to the analysis of the scrambled output
of a generator is that of studying the scrambler in isolation. To do so, we are going
to follow the practice of the theory of filtered LFSRs: we will represent the scramblers
as a sum of \emph{Zhegalkin polynomials}, that is, \emph{squarefree polynomials} over $\Z/2\Z$.
Due to the peculiarity of the field, no coefficients or exponents are necessary. If we can
describe the function as a sum of distinct polynomials, we will say that
the function is in \emph{algebraic normal form} (ANF).  For example, the $3$-bit scrambler of our 
example generator can be described by expanding recurrence~(\ref{eq:recsum}) into 
the following three functions in ANF:
\begin{align*}
S_0(x_0,x_1,x_2,y_0,y_1,y_2) &= x_0 + y_0\\
S_1(x_0,x_1,x_2,y_0,y_1,y_2) &= x_1 + y_1 + x_0y_0\\
S_2(x_0,x_1,x_2,y_0,y_1,y_2) &= x_2 + y_2 + x_1y_1 + x_0y_0x_1 + x_0y_0y_1\\
\end{align*}  

There is indeed a connection between the \emph{polynomial degree} of a Boolean function, that is,
the maximum degree of a polynomial in its ANF, the linear complexity of
the bits of a linear engine, and the linear complexity of the bit returned
by the Boolean function applied to the engine state. We will use the standard notation $[n]=\{\?0,1,2,\ldots,n-1\?\}$.

\begin{lemma}
\label{lemma:subset}
Let $\mathbf E$ be the splitting field of a primitive polynomial $p(x)$ of
degree $n$, represented by polynomials in $\alpha$ computed modulo $p(\alpha)$.
Then, there is a tuple $\langle t_0, t_1, \ldots t_{k-1}\rangle\in [n]^k$
such that
\[
\prod_{i\in [k]}\alpha^{2^{t_i}}=\alpha^{c}
\]
iff there is an $S\subseteq [n]$ with $0<|S|\leq k$ and
\[
c= \sum_{s\in S}2^s.
\] 
\end{lemma}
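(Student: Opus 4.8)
The plan is to work entirely with exponents modulo the multiplicative order of $\alpha$. Since $p$ is primitive of degree $n$, the element $\alpha$ has order $2^n-1$, so $\alpha^{2^n-1}=1$ and $\alpha^{2^n}=\alpha$; hence any identity $\alpha^a=\alpha^b$ is equivalent to $a\equiv b\pmod{2^n-1}$, and the elements $\alpha^{2^{t}}$ with $t\in[n]$ are exactly the $n$ roots of $p$. Writing $N=\sum_{i\in[k]}2^{t_i}$, the product on the left is simply $\alpha^{N}$, so the whole statement reduces to an arithmetic claim about which residues $N\bmod(2^n-1)$ arise as a sum of $k$ powers of two with exponents in $[n]$, matched against the residues representable by a set $S\subseteq[n]$ of size between $1$ and $k$.

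For the ``if'' direction I would start from $S$ and exhibit a tuple. The element $\alpha^{c}=\prod_{s\in S}\alpha^{2^{s}}$ is already a product of $|S|$ roots, so the only work is padding the factor count from $|S|$ up to $k$. Here I would use the two \emph{splitting identities} $\alpha^{2^{s}}=\alpha^{2^{s-1}}\alpha^{2^{s-1}}$ for $1\le s\le n-1$ (since $2^{s-1}+2^{s-1}=2^{s}$) and $\alpha^{2^{0}}=\alpha^{2^{n-1}}\alpha^{2^{n-1}}$ (since $2^{n-1}+2^{n-1}=2^{n}\equiv 1$). Each application turns one factor into two while preserving both the product and the requirement that every exponent remain a root, i.e.\ stay in $[n]$; because $|S|\ge 1$ there is always a factor to split, so I can reach exactly $k$ factors for any $k\ge|S|$, producing a tuple in $[n]^k$.

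For the ``only if'' direction I would go from the tuple to $S$ by reducing $N$ modulo $2^n-1$ via repeated \emph{folding} $N\mapsto\lfloor N/2^{n}\rfloor+(N\bmod 2^{n})$, legitimate because $2^{n}\equiv 1$. The key observation is that folding never increases the binary digit sum: splitting $N$ into its high and low $n$-bit halves preserves total weight, and recombining those halves by addition can only lose weight through carries, so $\nu$ of the folded value is at most $\nu N\le k$ (the latter because a sum of $k$ powers of two has weight at most $k$). Since folding strictly decreases any value that is at least $2^{n}$, iterating terminates at some $c^{*}\in[0,2^{n}-1]$ congruent to $c$; taking $S$ to be the binary support of $c^{*}$ then gives $|S|=\nu(c^{*})\le k$ and $\sum_{s\in S}2^{s}=c^{*}\equiv c$.

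The one delicate point, which I expect to be the main obstacle, is the boundary residue $c\equiv 0$ (the field element $1$), where the support of $c^{*}$ would be empty and violate $0<|S|$. I would handle it by noting that folding a positive integer always yields a positive integer, so, since $N\ge 1$ (a nonempty sum of powers of two), the terminal value $c^{*}$ lies in $[1,2^{n}-1]$; being congruent to $0$, it must equal $2^{n}-1$, whose support is all of $[n]$ and whose weight is $n$. As folding did not increase weight we get $k\ge\nu N\ge n$, so the choice $S=[n]$ (with $c=2^{n}-1\equiv 0$) is admissible exactly when it is forced. With this case dispatched, both implications hold in full generality.
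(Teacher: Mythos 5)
Your proof is correct. The ``if'' direction is essentially the paper's own argument: you pad a set $S$ up to a $k$-tuple by repeatedly splitting one factor $\alpha^{2^{s}}$ into $\alpha^{2^{s-1}}\alpha^{2^{s-1}}$ (with $2^{0}$ splitting into two copies of $2^{n-1}$ via $2^{n}\equiv 1 \pmod{2^n-1}$), which is exactly the paper's explicit tuple $s_0,\ldots,s_{j-2},\,s_{j-1}-1,\ldots,s_{j-1}-k+j,\,s_{j-1}-k+j$. For the ``only if'' direction the paper argues by induction on the tuple, merging any two equal exponents via $\alpha^{2^{t}}\alpha^{2^{t}}=\alpha^{2^{t+1}}$ until all exponents are distinct; you instead pass to the integer $N=\sum_i 2^{t_i}$ and reduce it modulo $2^n-1$ by folding, using the facts that folding never increases the binary weight and that $\nu N\le k$. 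These are the same phenomenon (a carry in binary addition is exactly a merge of two equal powers), but your formulation has the minor advantage of making two points fully explicit that the paper leaves implicit: that the wraparound $2^{n}\mapsto 2^{0}$ is handled correctly, and that the residue $c\equiv 0$ (the field element $1$) is reachable only when $k\ge n$, in which case $S=[n]$ is the required nonempty set. Nothing is missing; the argument is sound.
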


\begin{proof}
First we show that the all $c$'s are of the form above.
When all the $t_i$'s are distinct, we have trivially $S=\{\?t_i \mid 0\leq
i<k\?\}$. If $t_i=t_j$
\[\alpha^{2^{t_i}}\alpha^{2^{t_j}}= \alpha^{2\cdot
2^{t_i}}=\alpha^{2^{t_i+1}},\] 
remembering that computations of exponents of $\alpha$ are to be made modulo
$2^n-1$. Thus, the problem is now reduced to a smaller tuple, and we can
argue by induction that the result will be true of some $S\subseteq
[k-1]\subseteq[k]$.

Now we show that for every $S$ as in the statement there exists a corresponding
tuple. If $|S|=k$, this is obvious. Otherwise, let $|S|=j$ and \lst s{j} be an
enumeration of the elements of $S$. Then, the $k$-tuple
\[
s_0, s_1, \ldots , s_{j-2}, s_{j-1} - 1, s_{j-1} - 2, \ldots,
s_{j-1} - k + j + 1, s_{j-1} - k + j, s_{j-1} - k + j,
\]
where the operations above are modulo $n$, gives rise exactly to the
set $S$, as 
\[
\alpha^{2^{s_{j-1} - 1}}\alpha^{2^{s_{j-1} - 2}} \cdots \alpha^{2^{s_{j-1} - k + j + 1}}\alpha^{2^{ s_{j-1} - k + j}}\alpha^{2^{ s_{j-1} - k + j}} = \alpha^{2^{s_j}}.
\]
\end{proof}

An immediate consequence of the previous lemma is that there is a bound on the increase
of linear complexity that a Boolean function, and thus a scrambler, can
induce:
\begin{proposition}
If $f$ is a Boolean function of $n$ variables with polynomial degree $d$ in ANF
and $x_i$, $0\leq i< n$, are the bits of a
linear engine with $n$ bits of state, then the rational function representing
$f\bigl(x_0, x_1,\ldots , x_{n-1})$ has linear complexity
at most
\begin{equation}
\label{eq:ub}
U(n,d) = \sum_{j=1}^d {n\choose j}.
\end{equation}
\end{proposition}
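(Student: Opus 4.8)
The plan is to bound the linear complexity of the sequence produced by $f(x_0,\ldots,x_{n-1})$ by counting exactly which powers $\alpha^c$ can appear in its splitting-field representation~(\ref{eq:repr2}), since that count \emph{is} its linear complexity. First I would recall that each state bit $x_i$, viewed as a sequence, has a representation~(\ref{eq:repr}) all of whose exponents lie in the cyclotomic coset $C=\{\,2^t\mid t\in[n]\,\}$. The two operations I need are: summing two sequences adds their representations coefficientwise, while the output-by-output product convolves them, so that in the product the exponents \emph{add} (in the exponent group modulo $2^n-1$). These two facts reduce the whole problem to controlling the exponents produced by the individual monomials of the ANF of $f$.

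Next I would analyze a single squarefree monomial $x_{i_1}\cdots x_{i_m}$ of degree $m\le d$. Evaluated on the state bits, it is the output-by-output product of $m$ sequences, each with exponents in $C$; expanding the product, every resulting exponent has the form $2^{t_1}+\cdots+2^{t_m}$ with each $t_j\in[n]$, i.e. it comes from a tuple $\langle t_1,\ldots,t_m\rangle\in[n]^m$. This is precisely the hypothesis of Lemma~\ref{lemma:subset} with $k=m$, which tells me the attainable exponents are exactly the $c=\sum_{s\in S}2^s$ with $S\subseteq[n]$ and $0<|S|\le m$. Distinct subsets are distinct binary numerals and hence give distinct exponents, so a degree-$m$ monomial contributes at most $\sum_{j=1}^{m}{n\choose j}$ powers to the representation.

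Finally I would assemble the bound for $f$ itself. Writing $f$ in ANF as a sum of monomials of degree at most $d$, the exponent set of $f(x_0,\ldots,x_{n-1})$ is contained in the union of the exponent sets of its monomials; passing to the sum can only produce cancellations, which delete powers but never create new ones. Hence the exponent set is contained in $\{\,c=\sum_{s\in S}2^s : S\subseteq[n],\ 0<|S|\le d\,\}$, of cardinality $\sum_{j=1}^{d}{n\choose j}=U(n,d)$, and the claimed bound~(\ref{eq:ub}) follows because linear complexity equals the number of distinct powers in~(\ref{eq:repr2}). I expect the one genuinely substantive step to be the convolution claim: one must verify that the output-by-output product of sequences really corresponds to adding exponents in the splitting-field representation, so that Lemma~\ref{lemma:subset} governs the attainable $c$'s; granting that, everything else is bookkeeping with subsets of $[n]$.
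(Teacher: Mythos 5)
Your proposal is correct and follows the same route as the paper: the paper's (very terse) argument is precisely that the splitting-field representation of $f(x_0,\ldots,x_{n-1})$ can only contain powers $\alpha^c$ with $c=\sum_{s\in S}2^s$ for nonempty $S\subseteq[n]$ of size at most $d$, by Lemma~\ref{lemma:subset}, and that the count of such exponents is $U(n,d)$. You have merely made explicit the steps the paper labels ``obvious''---that products of sequences add exponents in the splitting-field representation, that distinct subsets give distinct exponents modulo $2^n-1$, and that summing monomials and cancellation can only remove powers---so the two arguments coincide.
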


The result is obvious, as the number of possible nonzero coefficients of the splitting-field
representation of $f\bigl(x_0,x_1\dots,x_{n-1}\bigr)$ is bounded by $U(n,d)$ by
Lemma~\ref{lemma:subset}. Indeed, $U(n,d)$ is well known:
it is the standard bound on the linear complexity of a filtered LFSR. Our case is different, as we are
applying Boolean functions to bits coming from different instances of the same LFSR, but the mathematics is the same.

There is also another inherent limitation: a uniform scrambler
on $m$ bits cannot have polynomial degree $m$:
\begin{proposition}
\label{prop:wminus1}
Consider a vector of
$n$ Boolean functions on $m$ variables such that the preimage 
of each vector of $n$ bits contains exactly $2^{m-n}$ vectors of $m$ bits.
Then, no function in the vector can have the (only) monomial of degree
$m$ in its ANF.
\end{proposition}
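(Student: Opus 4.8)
The plan is to reduce the claim to a parity computation, using the classical relationship between the top coefficient of the algebraic normal form and the weight of a Boolean function, and then to read off that weight from the uniformity hypothesis.

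First I would recall the Möbius (ANF) transform: for any Boolean function $g$ of $m$ variables, the coefficient of the monomial $\prod_{i\in S}x_i$ in the ANF of $g$ equals the sum, over $\Z/2\Z$, of the values of $g$ on all inputs whose support is contained in $S$. Specializing to the full index set $S=[m]$, every input is admissible, so the coefficient of the \emph{unique} degree-$m$ monomial $x_0x_1\cdots x_{m-1}$ is exactly $\sum_{x\in(\Z/2\Z)^m}g(x)$ computed modulo $2$, i.e.\ the parity of the weight $\lvert g^{-1}(1)\rvert$. Hence the degree-$m$ monomial occurs in the ANF of $g$ if and only if $g$ takes the value $1$ an odd number of times. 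This is the only nontrivial ingredient, and it is standard.

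Next I would compute the weight of each component function. Write the given vector as $\langle f_0,\dots,f_{n-1}\rangle$ and let $F=(f_0,\dots,f_{n-1})$ be the induced map $(\Z/2\Z)^m\to(\Z/2\Z)^n$. For a fixed $i$, the inputs on which $f_i=1$ are precisely those $x$ whose image $F(x)$ has its $i$-th coordinate set. Grouping inputs by their image and using that every fiber of $F$ has cardinality $2^{m-n}$, this count is $2^{m-n}$ times the number of outputs $y\in(\Z/2\Z)^n$ with $y_i=1$, namely \[\lvert f_i^{-1}(1)\rvert = 2^{m-n}\cdot 2^{n-1}=2^{m-1}.\] Crucially this value is independent of $i$, so \emph{every} component has weight exactly $2^{m-1}$.

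Finally, since $2^{m-1}$ is even (for $m\geq 2$, which is the only case of interest, a uniform scrambler on a single bit being degenerate), each $f_i$ attains the value $1$ an even number of times, so by the first step the coefficient of $x_0x_1\cdots x_{m-1}$ in the ANF of $f_i$ vanishes. I expect no genuine obstacle: the argument is essentially a one-line parity count once the ANF-coefficient-versus-weight identity is in hand, and the only point meriting care is the bookkeeping that the balancedness hypothesis pins the weight of every coordinate function to the same even value $2^{m-1}$, regardless of which output bit one inspects.
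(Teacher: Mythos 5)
Your proof is correct and follows essentially the same route as the paper's: both arguments reduce the claim to the observation that each coordinate function is balanced (weight $2^{m-1}$, hence even) and that the presence of the unique degree-$m$ monomial in the ANF is equivalent to the weight being odd. The only cosmetic difference is that you invoke the M\"obius/ANF inversion formula directly, whereas the paper reaches the same parity identity by noting that every monomial of degree less than $m$ has even weight while the top monomial has weight one.
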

\begin{proof}
Since the vector of functions maps the same number of input values to each
output value, if we look at each bit and consider its value over all possible vectors
of $m$ bits, it must be zero $2^{m-1}$ times, one $2^{m-1}$
times. But all monomials of degree less than $m$ evaluate to an even number of
zeroes and ones. The only monomial of degree $m$ evaluates to one exactly once.
Hence, it cannot appear in any of the polynomial functions.
\end{proof}

Getting back to our example, the bounds for linear complexity of the bits of our
\xorshiftp generator are ${6\choose 1}= 6$, ${6\choose 1}+ {6\choose 2}=21$,
and ${6\choose 1}+ {6\choose 2}+ {6\choose 3}=41$.
From Figure~\ref{fig:sumgf}, the first and last bits attain the upper bound~(\ref{eq:ub}), whereas the
intermediate bit does not.
However, Lemma~\ref{lemma:subset} implies that every subset of $S$ might
be associated with a nonzero coefficient. If this does not happen, as in the case of the intermediate bit, 
it must be the case that all the contributions for that subset of $S$ canceled
out.

The amount of cancellation happening for a specific combination of linear engine
and scrambler can in principle be computed exactly using the splitting-field representation, but
as we have discussed this approach does not lend itself to computations beyond very small
generators. However, we gathered some empirical evidence by computing the
polynomial degree of the Boolean function associated with a bit
using~(\ref{eq:recsum}) and then by
measuring directly the linear complexity using the Berlekamp--Massey
algorithm~\cite{KleSC}: with careful implementation, this technique can be applied much beyond where the splitting-field representation can get. The algorithm
needs an upper bound on the linear complexity to return a reliable result, but
we have~(\ref{eq:ub}).
We ran extensive tests on several generators, the largest being $12$-bit
generators with $24$ bits of state.
The results are quite uniform: unless the state array of the linear engine is
tiny, if the characteristic polynomial is primitive, cancellation is an extremely rare event.

These empirical finds suggest that it is a good idea to independently study 
scramblers as Boolean functions, and in particular estimating or computing their
polynomial degree. Then, given a class of generator, one should gather some
empirical evidence that cancellation is rare, and at that point use the
upper bound~(\ref{eq:ub}) as an estimate of linear complexity. This is the 
approach that we will follow in the following sections.

We remark however that a high polynomial degree is not sufficient to
guarantee to pass all tests related to linearity. The problem is that such tests
depend on the joint output of the Boolean functions we are considering.
Moreover, there is a great difference between having a high polynomial degree and
passing a linear-complexity or binary-rank test.

For example, consider the following pathological Boolean function that could be
part of a scrambler:
\begin{equation}
\label{eq:pathological}
x_{w-1} + \prod_{i\in[w-1]} x_i.
\end{equation}
This function has very high polynomial degree, and thus a likely high
linear complexity.
The problem is that if, say, $w=64$ from a practical viewpoint it is
indistinguishable from $x_{w-1}$, as the ``correction'' that raises its linear
complexity rarely happens. If the state array is small, this bit will fail
all linearity tests.
A single high-degree monomial is not sufficient in
isolation, despite Lemma~\ref{lemma:subset}, so we will look
for scramblers represented by a large number of monomials.

As a last counterexample, and cautionary tale, we consider the scrambler given
by a change of sign, that is, multiplication by the all-ones word. It is trivial
to write this scrambler using negated variables, but when we expand it in ANF
we get
\begin{equation}
\label{eq:neg}
\bar x_{w-1} + \prod_{k\in [w-1]} \bar x_k = 1+x_{w-1} + \prod_{k\in [w-1]}
\bigl(1+ x_k) = 1+x_{w-1} + \prod_{S\subseteq[w-1]}\prod_{k\in S} x_k.
\end{equation}
In other words, the ANF contains all monomials
formed with all other bits, but the Boolean function is still as pathological
as~(\ref{eq:pathological}), as there is no technical difference between $x_i$
and $\bar x_i$. Too few monomials are problematic, but too many are, too.

\subsection{The \texttt{+} scrambler}
\label{ref:plusscr}
We conclude this part of the paper with a detailed discussion of each scrambler,
using their representations by squarefree polynomials, as discussed in the previous section.
We start from the \texttt{+} scrambler, introduced in Section~\ref{sec:plus}.
Recurrence~(\ref{eq:recsum}) can be easily unfolded to a closed form for the scrambled bit $s_b$:
\begin{multline}
\label{eq:sumasn}
s_b = x_b + y_b + \sum_{i=1}^{b} x_{i-1} y_{i-1}
\sum_{S\subseteq[b-i]}\prod_{j\in S}x_{i+j}\prod_{j\in [b-i]\setminus S}y_{i+j}\\
=x_b + y_b + \sum_{i=1}^{b} x_{i-1} y_{i-1}
\prod_{j\in[b-i]}\bigl(x_{i+j}+y_{i+j}\bigr).
\end{multline}
If the $x_i$'s and the $y_i$'s are distinct, the expressions above are in ANF:
there are exactly $2^b+1$ monomials with maximum degree $b+1$. Thus, if the
underlying linear engine has $n$ bits of state the linear-degree bound for bit
$b$ will be $U(n,b+1)$, where $U(-,-)$ is defined by~(\ref{eq:ub}).

An important observation is that no monomial appears in two instances of the
formula for different values of $b$. This implies that any linear combination of
bits output by the \texttt{+} scrambler has the same linear complexity as the
bit of highest degree, and at least as many monomials: we say in this case that there
is no \emph{polynomial degree loss}.
Thus, except for the very lowest bits, we expect that no linearity
will be detectable.

In Table~\ref{tab:plusdeg} we report, using~(\ref{eq:ub}), the estimated linear complexity of the lowest bits of
some generators. The lowest values have
also been verified using the Berlekamp--Massey algorithm: as expected, we could
not detect any linear-degree loss; running the algorithm on the largest values is unfeasible.
While an accurate linear-complexity test might catch the fourth lowest bit
of \xoroshirop[128], the degree raises quickly to the point the linearity is
undetectable.

\begin{table}
\renewcommand{\arraystretch}{1.1}
\centering
\begin{tabular}{rrrrrr}
{\small\xoroshirop[128]}&{\small\xoshirop[256]}&{\small\xoshirop[512]}&{\small\xoroshirop[1024]}&{\small\xoroshirop[64]}&{\small\xoshirop[128]}\\
\hline
128	&	256	&	512	&	1024&	64	&	128	\\
8256	&	32896	&	131328	&	524800&	2080	&	8256\\	
349632	&	2796416	&	22370048	&	178957824&	43744	&	349632\\	
11017632	&	177589056	&	2852247168	&	45723987200&	679120	&	11017632\\	
275584032	&	8987138112	&	290367762560	&	9336909979904	&	8303632	&	275584032 \\
\end{tabular}
\caption{\label{tab:plusdeg}Estimated linear complexity of the five lowest bits
of generators (the first line is bit 0) using the \texttt{+} scrambler.}
\end{table}

The situation for
Hamming-weight dependencies is not so good, however, as empirically
(Table~\ref{tab:test64}) we have already observed that \xoroshiro engines
still fail our test (albeit using three orders of magnitude more data).
We believe that this is due to the excessively regular structure of the
monomials.

Note that if the underlying linear engine is $d$-dimensionally equidistributed,
the scrambler generator will be in general at most $(d-1)$-dimensionally
equidistributed (see Section~\ref{sec:ed}).

\subsection{The \texttt{*} scrambler}
\label{ref:starscr}

We now discuss the \texttt{*} scrambler, introduced in
Section~\ref{sec:star}, in the case of a multiplicative constant of
the form $2^s+1$. This case is particularly interesting  because it is very fast
on recent hardware; in particular, $(2^s+1)\cdot x = x + (x \ll s)$, where the
sum is in $\Z/2^w\Z$, which provides a multiplication-free implementation.
Moreover, as we will see, the analysis of the $2^s+1$ case sheds light on the general case, too.

Let $z=(2^s+1)x$. Specializing~(\ref{eq:sumasn}), we have that
$z_b = x_b$ when $b<s$; otherwise, $b = c + s\geq s$ and
\begin{multline}
\label{eq:star}
z_b = z_{c+s} = x_{c+s} + x_c + \sum_{i=1}^c x_{i-1+s}\, x_{i-1}
\sum_{S\subseteq[c-i]}\prod_{j\in S}x_{i+j+s}\prod_{j\in [c-i]\setminus
S}x_{i+j}\\
= x_{c+s} + x_c + \sum_{i=1}^c x_{i-1+s} \,x_{i-1}\prod_{k\in
[c-i]}\bigl(x_{i+k+s} + x_{i+k}\bigr).
\end{multline}
However, contrarily to~(\ref{eq:sumasn}) the expressions above do not denote an
ANF, as the same variable may appear many times in the same
monomial.

We note that the monomial $x_sx_0x_{s+1}\cdots x_{s+c-1}$,
which is of degree $c+1$, appears only and always in the function associated with $y_b$, $b>s$. Thus, bits
with $b\leq s$ have degree one, whereas bits $b$ with $b>s$ have degree $b-s+1$.
In particular, as in the case of \texttt{+}, there is no polynomial degree loss when combining different bits.

In the case of a generic (odd) constant $m$, one has to modify
recurrence~(\ref{eq:recsum}) to start including shifted bits at the right
stage, which creates a very complex monomial structure.
Note, however, that bits after the second-lowest bit set in $m$ cannot modify the
polynomial degree. Thus, the decrease of Hamming-weight dependencies we
observe in Table~\ref{tab:test64} even for \xoroshiros is not due to a higher polynomial
degree with respect to \texttt{+} (indeed, the opposite is true), but to a
richer structure of the monomials. The degree reported for the \texttt{+}
scrambler in Table~\ref{tab:plusdeg} can indeed be adapted to the present case: one has
just to copy the first line as many times as the index of 
the second-lowest bit set in $m$.


To get some intuition about the monomial structure, it is instructive to get
back to the simpler case $m=2^s+1$. 
From~(\ref{eq:star}) it is evident that monomials associated with different
values of $i$ cannot be equal, as the minimum variable appearing in a monomial is $x_{i-1}$. Once we fix
$i$ with $1\leq i\leq c$, the number of monomials is equal to the number of
sets of the form
\begin{equation}
\label{eq:subsets}
S + s\;\cup\; [c-i]\setminus S\; \cup\; \sing{s-1} \qquad S\subseteq [c-i]
\end{equation}
that can be expressed by an odd number of values of $S$ (if you can
express the set in an even number of ways, they cancel out).
But such sets are in bijection
with the values $(v \ll s) \lor \lnot v \lor (1\ll s-1)$ as $v$ varies
among the words of $c-i$ bits. In a picture, we are looking at the columnwise logical or of the following
diagram, where the $b_j$'s are the bits of $v$, for convenience numbered from
the most significant: 

\begin{figure}[h]
\centering
\includegraphics{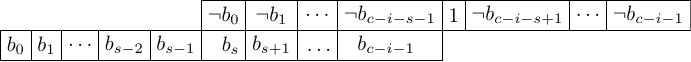}
\end{figure}

The first obvious observation is that if $s>c-i$ the two rows are nonoverlapping,
and they are not influenced by the one in position $s-1$. In this case, we
obtain all possible $2^{c-i}$ monomials.
%
%
%
More generally, such sets are all distinct iff
$s\geq(c-i+1)/2$, as in that case the values must differ either in the first $s$ or in the last $s-1$
bits: consequently, the number of monomials, in this case, is again $2^{c-i}$.
Minimizing $i$ and maximizing $c$ we obtain
$s\geq(w-s-1)/2$, 
whence $s\geq(w-1)/3$. In this case, the monomials of $z_b$ are exactly
$2 + 2^{c-1}+2^{c-2}+\cdots + 1 = 2^c+1 = 2^{b-s}+1$ when $b\geq s$.

As $s$ moves down from $(w-1)/3$, we observe empirically more and more reduction
in the number of monomials with respect to the maximum possible $2^{b-s}+1$. When we reach $s=1$,
however, a radical change happens: the number of monomials grows as $2^{b/2}$.

\begin{theorem}
\label{teo:carry3}
The number of monomials of the Boolean function representing bit $b$ of $3x$
is\footnote{Note that we are using Knuth's extension of \emph{Iverson's
notation}~\cite{KnuTNN}:
a Boolean expression between square brackets has value $1$ or $0$ depending on
whether it is true or false, respectively.}
\[(2+[\text{$b$ odd}])\cdot 2^{\lfloor b/2\rfloor} - 1.\] 
\end{theorem}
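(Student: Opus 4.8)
The plan is to specialize the closed form~(\ref{eq:star}) to $s=1$, where $b=c+1$, so that for $b\geq 1$
\[
z_b = x_b + x_{b-1} + \sum_{i=1}^{b-1} x_{i-1}\,x_i\prod_{k=0}^{b-2-i}\bigl(x_{i+k+1}+x_{i+k}\bigr),
\]
while the case $b=0$ is trivially $z_0=x_0$, one monomial. As already observed after~(\ref{eq:star}), every monomial of the $i$-th summand is divisible by $x_{i-1}$ and involves no variable of smaller index, so summands with distinct $i$ can share no monomial; moreover the two linear terms $x_b,x_{b-1}$ cannot coincide with any summand monomial, all of which have degree at least two. Hence $M_b = 2 + \sum_{i=1}^{b-1} m_i$, where $m_i$ counts the monomials surviving \emph{inside} the $i$-th summand. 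Thus cancellation happens only within a single summand, and it is exactly this internal cancellation that must be controlled.

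First I would isolate that internal structure. Pulling out the distinct factor $x_{i-1}$ and relabelling $y_j=x_{i+j}$, the $i$-th summand becomes $x_{i-1}R_m$ with $m=b-1-i$ and
\[
R_m = y_0\prod_{k=0}^{m-1}\bigl(y_k+y_{k+1}\bigr),
\]
so that $m_i = |R_{b-1-i}|$ and $M_b = 2 + \sum_{m=0}^{b-2}|R_m|$. The heart of the argument is a closed form for $R_m$ after full reduction to ANF: I claim that
\[
R_m = \Bigl(\prod_{\substack{0\le e\le m\\ e\text{ even}}} y_e\Bigr)\Bigl(\prod_{\substack{1\le o\le m\\ o\text{ odd}}} \bigl(1+y_o\bigr)\Bigr),
\]
which I would prove by induction on $m$ via the recurrence $R_m = R_{m-1}\bigl(y_{m-1}+y_m\bigr)$. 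The two identities $y^2=y$ and $(1+y)\,y=0$ over $\Z/2\Z$ do all the work: when $m$ is even, the factor $(1+y_{m-1})$ already present in $R_{m-1}$ forces $R_{m-1}\,y_{m-1}=0$, leaving $R_m=R_{m-1}\,y_m$; when $m$ is odd, the factor $y_{m-1}$ already present gives $R_{m-1}\,y_{m-1}=R_{m-1}$, so $R_m = R_{m-1}\bigl(1+y_m\bigr)$. In both cases the result again has the claimed shape, and counting the free choices among the $(1+y_o)$ factors yields at once $|R_m| = 2^{\lceil m/2\rceil}$.

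Finally I would assemble $M_b = 2 + \sum_{m=0}^{b-2} 2^{\lceil m/2\rceil}$ and evaluate the sum as a geometric series, pairing the consecutive equal terms $2^{\lceil m/2\rceil}$ and splitting on the parity of $b$; this gives $\sum_{m=0}^{b-2}2^{\lceil m/2\rceil} = (2+[\text{$b$ odd}])\,2^{\lfloor b/2\rfloor} - 3$, whence $M_b = (2+[\text{$b$ odd}])\,2^{\lfloor b/2\rfloor} - 1$, with the small cases $b=0,1$ checked directly. I expect the only real obstacle to be guessing and verifying the closed form for $R_m$: once the pattern $\prod y_{\text{even}}\cdot\prod(1+y_{\text{odd}})$ is conjectured from the first few values, the inductive step collapses to the two-line Boolean calculation above, and everything else is bookkeeping with the distinct-minimum-index observation and a geometric sum.
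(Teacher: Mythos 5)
Your proof is correct, but the key step is handled quite differently from the paper's. Both arguments share the same skeleton: specialize~(\ref{eq:star}) to $s=1$, observe that the $i$-th summand is the only one whose monomials have $x_{i-1}$ as minimum-index variable (so no cross-summand cancellation), treat the two degree-one terms separately, and finish with the same geometric sum. The difference is in counting the monomials that survive \emph{inside} one summand. The paper keeps the sum-over-subsets representation, identifies the candidate monomials with the words $v\lor\lnot(v\gg 1)$, and then determines by a multiplicity-parity argument which words are produced an odd number of times (the all-ones $v$, or a block of ones followed by an odd-length suffix with every other bit zero), getting $2^{\lceil (c-i)/2\rceil}$ per summand. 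You instead reduce each summand to an explicit algebraic normal form, proving by induction that $y_0\prod_{k=0}^{m-1}(y_k+y_{k+1})=\prod_{e\ \mathrm{even}}y_e\cdot\prod_{o\ \mathrm{odd}}(1+y_o)$ in the Boolean ring using only $y^2=y$ and $(1+y)\,y=0$, after which the count $2^{\lceil m/2\rceil}$ is immediate from expanding the $(1+y_o)$ factors. Your route buys something real: it exhibits the exact set of surviving monomials rather than only their number, and it replaces the paper's somewhat delicate ``odd number of occurrences'' reasoning with a two-case, two-line verification. (Incidentally, your per-summand count matches the paper's own sum $1+\sum_{k=0}^{\lfloor(c-i-1)/2\rfloor}2^k=2^{\lfloor(c-i-1)/2\rfloor+1}=2^{\lceil(c-i)/2\rceil}$; the exponent $\lceil(c-i-1)/2\rceil$ displayed at that point in the paper is an off-by-one slip, and the final total is unaffected.)
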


We remark a surprising combinatorial connection: this is the number
of binary palindromes smaller than $2^b$, that is, A052955 in the ``On-Line
Encyclopedia of Integer Sequences''~\cite{OEIS}.

\begin{proof}

When $s=1$, the different subsets
in~(\ref{eq:subsets}) obtained when $S$ varies are in bijection
with the values $v \lor \lnot (v \gg 1)$ as $v$ varies among the words of $c-i$
bits. Again, we are looking at the logical or by columns of the
following diagram, where the $b_j$'s are the bits of $v$ numbered from
the most significant: 
\begin{center}
\includegraphics{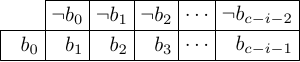}
\end{center}

Note that if there is a $b_j$ whose value is irrelevant, flipping will generate
two monomials that will cancel each other.

Let us consider now a
successive assignment of values to the $b_j$'s, starting from $b_0$.
We remark that as long as we assign ones, no assigned bit is irrelevant.
As soon as we assign a zero, however, say to $b_j$, we have that the value of $b_{j+1}$ will
no longer be relevant. To make $b_{j+1}$ relevant, we need to set $b_{j+2} =
0$. The argument continues until the end of the word, so we can actually
choose the value of $(c-i-j-1)/2$ bits, and only if $c-i-j-1$ is even
(otherwise, $b_{c-i-1}$ has no influence).

We now note that if we flip a bit $b_k$ that we were forced to set to zero,
there are two possibilities: either we chose $b_{k-1}=1$, in which case we
obtain a different monomial, or we chose $b_{k-1}=0$, in which case $b_k$ is
irrelevant, but by flipping also $b_{k+1}$ we obtain once again the same monomial, so the two copies
cancel each other.

Said otherwise, monomials with an odd number of occurrences
are generated either when all bits of $v$ are set to one, or when there is a
string of ones followed by a suffix of odd length in which every other bit (starting from
the first one) is zero. All in all, we have
\[
1 + \sum_{k=0}^{\bigl\lfloor \frac{c-i-1}2\bigr\rfloor} 2^k = 2^{\bigl\lceil
\frac{c-i-1}2\bigr\rceil}
\]
possible monomials, where $2k+1$ is the length of the suffix. Adding up over all
$i$'s, and adding the two degree-one monomials we have that the number of
monomials of $y_b$ for $b=c+1>0$ is
\[
2 + \sum_{i=1}^{b-1}2^{\bigl\lceil \frac{b-i-2}2\bigr\rceil} = 
1 + 1 + \sum_{i=1}^{b-1}2^{\bigl\lceil \frac{b-i-2}2\bigr\rceil} =
(2+[\text{$b$ odd}])\cdot 2^{\lfloor b/2\rfloor} - 1.
\] 
The correctness for the case $b=0$ can be checked directly.
\end{proof}

We remark that in empirical tests the $3x$ scrambler performs very poorly:
thus, the excessive cancellation of monomials implied by the theorem above
has practical consequences.

\subsection{The \texttt{++} scrambler}

We will now examine the strong scrambler \texttt{++} introduced in
Section~\ref{sec:plusplus}. We choose two words $x$, $y$ from the state of the
linear engine and then $z = \rho_r(x+y)+x$, where $+$ denotes sum in $\Z/2^w\Z$.

Computing an ANF for the final Boolean functions appears to be a hard
combinatorial problem:
nonetheless, with this setup we know that the lowest bit will have
polynomial degree $w-r+1$, and we expect that the following bits
will have an increasing degree, possibly up to saturation. Symbolic computations 
in low dimension show however that the growth is quite irregular.
The linear complexity of the lowest bits is large, as shown in Table~\ref{tab:ppdeg}, where 
we display a theoretical estimate based on~(\ref{eq:ub}), assuming that on lower bits degree
increase at least by one at each bit (experimentally, it usually grows more quickly---see again Table~\ref{tab:ppdeg}).

\begin{table}
\renewcommand{\arraystretch}{1.1}
\renewcommand{\tabcolsep}{4pt}
\centering
\begin{tabular}{rrrrrr}
{\small\xoroshiropp[128]}&{\small\xoshiropp[256]}&{\small\xoshiropp[512]}&{\small\xoroshiropp[1024]}&{\small\xoshiropp[128]}\\
\hline
$1\times 10^{36} $&$3\times 10^{48} $&$1\times 10^{68} $&$9\times 10^{74}$&$1\times 10^{27}$ \\
$2\times 10^{36} $&$2\times 10^{49} $&$1\times 10^{69} $&$2\times 10^{76}$&$5\times 10^{27}$ \\
$3\times 10^{36} $&$1\times 10^{50} $&$9\times 10^{69} $&$4\times 10^{77}$&$2\times 10^{28}$ \\
$4\times 10^{36} $&$4\times 10^{50} $&$8\times 10^{70} $&$1\times 10^{79}$&$6\times 10^{28}$ \\
\end{tabular}
\caption{\label{tab:ppdeg}Approximate lower bound on the estimated linear complexity of the four lowest bits (the first line is bit 0) of
generators using the \texttt{++} scrambler with
parameters from Table~\ref{tab:scr64} and~\ref{tab:scr32}.}
\end{table}

This scrambler is potentially very fast, as it requires just three operations
and no multiplication, and it can reach a high polynomial degree, as it uses
$2w$ bits.\footnote{Symbolic computation suggests that this scrambler can reach
only polynomial degree $2w-3$; while we have the bound $2w-1$ by
Proposition~\ref{prop:wminus1}, proving the bound $2w-3$ is an open problem.}
Moreover, its simpler structure makes it attractive in hardware implementations.
However, the very regular structure of the \texttt{+} scrambler makes experimentally
\texttt{++} less effective on Hamming-weight dependencies.

As a basic heuristic, we suggest to choose a rotation parameter
$r\in[w/4\..3w/4]$ such that $r$ and $w-r$ are both prime (or at least odd), and they are not
equal to any of the shift/rotate parameters appearing in the generator (the
second condition being more relevant than the first one). Smaller values of $r$
will of course provide a higher polynomial degree, but too small values yield too short carry chains.
For $w=64$ candidates are $17$, $23$, $41$, and $47$; for $w=32$ one has
$13$ and $19$; for $w=16$ one has $5$ and $11$. In any case, a
specific combination of linear engine and scrambler should be 
tested thoroughly.

As in the case of the \texttt{+} scrambler, if the underlying linear engine is
$d$-dimensionally equidistributed, the scrambler generator will be in general at
most $(d-1)$-dimensionally equidistributed (see Section~\ref{sec:ed}).

\subsection{The \texttt{**} scrambler}

We conclude our discussion with the strong scrambler \texttt{**} introduced in
Section~\ref{sec:starstar}. We will be discussing in detail the case with multiplicative
constants of the form $2^s+1$ and $2^t+1$, which is particularly fast (the $+$ symbol will denote
sum in $\Z/2^w\Z$ for the rest of this section).

Let $z = \rho_r(x\cdot (2^s+1)) \cdot (2^t
+ 1)$. The $\min\{\?r,t\?\}$ lowest bits of $z$ are the
$\min\{\?r,t\?\}$ highest bits of $x\cdot (2^s+1)$.
To choose $s$, $r$, and $t$ we can leverage our previous knowledge of the
scrambler \texttt{*}.
We start by imposing that $s<t$, as choosing $s=t$ generates several duplicates
that reduce significantly the number of monomials in the ANF of the final
Boolean functions, whereas $t<s$ provably yields a lower minimum degree for the same $r$ (empirical
computations show also a smaller number of monomials). We also have to impose
$t<r $, for otherwise some bits or xor of pair of bits will have very low linear
complexity (polynomial degree one). So we
have to choose our parameters with the constraint $s<t<r$.
Since the degree of the lowest bit is $\max(1, w - r - s + 1)$, choosing $r =
t+1$ maximizes the minimum degree across the bits. Moreover, we would like to
keep $s$ and $t$ as small as possible, to increase the minimum linear complexity
and also to make the scrambler faster.

Also in this case computing an ANF for the final Boolean functions appears to be a hard
combinatorial problem:
nonetheless, with this setup we know that the lowest bit will have (when
$r+s\leq w$) polynomial degree $w-r-s+1$, and we expect that the following bits
will have increasing degree up to saturation (which happens at degree $w-1$ by
Proposition~\ref{prop:wminus1}). Symbolic computations in low
dimension show some polynomial degree loss caused by the second multiplication unless $r=2t+1$; moreover, for that value of
$r$ the polynomial degree loss when combining bits is almost
absent.
Taking into consideration the bad
behavior of the multiplier $3$ highlighted by Theorem~\ref{teo:carry3}, we conclude that the best choice is $s=2$, $t=3$, and consequently $r=7$. These are the parameters reported in Table~\ref{tab:scr64}.
The linear complexity of the lowest bits is extremely large, as shown in Table~\ref{tab:ssdeg}.\footnote{Note that as we move
towards higher bits the \texttt{++} scrambler will surpass the linear complexity of the \texttt{**} scrambler; the fact
that the lower bits appear of lower complexity is due only to the fact that we use much larger rotations in the \texttt{++} case.}

\begin{table}
\renewcommand{\arraystretch}{1.1}
\renewcommand{\tabcolsep}{4pt}
\centering
\begin{tabular}{rrrrrr}
{\small\xoroshiross[128]}&{\small\xoshiross[256]}&{\small\xoshiross[512]}&{\small\xoroshiross[1024]}&{\small\xoroshiross[64]}&{\small\xoshiross[128]}\\
\hline
$3\times 10^{37} $&$2\times 10^{57} $&$4\times 10^{75} $&$1\times 10^{93}$&$2\times 10^{18}$&$8\times 10^{25}$ \\
$4\times 10^{37} $&$7\times 10^{57} $&$3\times 10^{76} $&$2\times 10^{94}$&$3\times 10^{18}$&$3\times 10^{26}$ \\
$6\times 10^{37} $&$3\times 10^{58} $&$2\times 10^{77} $&$3\times 10^{95}$&$5\times 10^{18}$&$1\times 10^{27}$ \\
$7\times 10^{37} $&$9\times 10^{58} $&$2\times 10^{78} $&$6\times 10^{96}$&$6\times 10^{18}$&$5\times 10^{27}$ \\
\end{tabular}
\caption{\label{tab:ssdeg}Approximate estimated linear complexity of the four lowest bit (the first line is bit 0) of
generators using the \texttt{**} scrambler with
parameters from Table~\ref{tab:scr64} and~\ref{tab:scr32}.}
\end{table}

At $32$ bits, however, tests show that this scrambler is not
sufficiently powerful for \xoroshiro[64], and Table~\ref{tab:scr32} reports
indeed different parameters:
the first multiplier is the constant used for the \texttt{*} scrambler, and the
second multiplier $2^t+1$ has been chosen so that bit $t$ is not set in the
first constant. Again, $r=2t+1$, following the same heuristic of the previous
case.

\section{Conclusions}

The combination of \xoroshiro/\xoshiro and suitable
scramblers provides a wide range of high-quality and fast solutions
for pseudorandom number generation. Parallax has embedded in their recently designed Propeller 2 microcontroller
\xoroshiross[128] and the $16$-bit \xoroshiropp[32];
\xoroshirop[116] is the stock generator of Erlang and \xoshiross[256] 
is the stock generator of the popular embedded language Lua and of
GNU Fortran. \xoroshiropp[128] and \xoshiropp[256] are scheduled to 
be included in Java 17 as part of JDK Enhancement Proposal 356.
Recently, the speed of 
\xoshiross[128] has found application in cryptography~\cite{BFMFYL,GeREPMIQ}.

We believe that a more complete study of scramblers can shed some further light
on the behavior of such generators: the open problem is that of devising
a model explaining the elimination of Hamming-weight dependencies. 
The main difficulty is that analyzing the Boolean functions representing each scrambled bit in isolation is not sufficient, 
as Hamming-weight dependencies are generated by their collective behavior.

There are variants of the scramblers we discussed that do not use rotations: for
example, in the \texttt{++} and \texttt{**} scramblers the rotation can be
replaced by xoring $x$ with $x \gg r$, as also this operation will increase the
linear complexity of the lower bits. For contexts in which rotations are not
available or too expensive, one might explore the possibility of using \xorshift
generators scrambled with such variants.



There is a vast literature on filtered LFSR that might be used
to prove aspects we approached only with symbolic small-state computations.
For example, in~\cite{KLKIBLCKOFG} the authors prove a lower bound on the linear
degree of a Boolean function made of a single very specific monomial, something
for which we just argued based on measurements made using the Berlekamp--Massey algorithm.
In~\cite{BrSANF} the authors try to provide closed forms or even ANFs when
the argument of a Boolean function is multiplied or summed with a constant, which
might be a starting point for a closed form for the \texttt{**} scrambler.

In general, it is an interesting open problem to correlate explicitly the
monomial structure of a Boolean function in ANF with its resilience to linearity
tests. Intuitively, recalling~(\ref{eq:neg}), one sees that besides large-degree
monomials one needs small-degree monomials to make the tests ``perceive'' the
increase in linear complexity at the right time.

\begin{acks}
The authors would like to thank Parallax developers Chip Gracey, Evan Hillas 
and Tony Brewer for their interest, enthusiasm and proofreading, 
Pierre L'Ecuyer for a number of suggestions that significantly improved the
quality of the presentation, Raimo Niskanen of the Erlang/OTP team
and Nat Sothanaphan for several useful discussions, Guy Steele for stimulating correspondence
and for suggesting to include data-dependency diagrams,
Robert H. Lewis for the unbelievable speed of Fermat~\cite{LewF}, and the Sage authors
for a wonderful tool~\cite{Sage}.
\end{acks}


\bibliography{biblio}

\end{document}